\documentclass[11pt]{article}

\usepackage{amsmath}
\usepackage{amsthm}
\usepackage{amssymb}
\usepackage{times}
\usepackage{helvet}
\usepackage{bbm}
\usepackage{courier}
\usepackage{graphicx}
\usepackage{wrapfig}
\usepackage{nicefrac}
\usepackage{hyperref}
\usepackage[vlined,ruled,linesnumbered]{algorithm2e}
\usepackage[nameinlink,capitalise]{cleveref}
\usepackage{paralist}
\hypersetup{%
    bookmarks=false,    % show bookmarks bar?
    pdftitle={Multiwinner Elections When Each Voter Ranks Only Few},    % title
    pdfauthor={Matthias Bentert and Piotr Skowron},                     % author
    colorlinks=true,       % false: boxed links; true: colored links
    linkcolor=blue,       % color of internal links
    citecolor=blue,       % color of links to bibliography
    filecolor=black,        % color of file links
    urlcolor=blue,        % color of external links
    linktoc=page            % only page is linked
}

\frenchspacing
\setlength{\oddsidemargin}{0.25in}
\setlength{\evensidemargin}{\oddsidemargin}
\setlength{\textwidth}{6in}
\setlength{\textheight}{9in}
\setlength{\topmargin}{-0.5in}

\allowdisplaybreaks

\usepackage{todonotes}

\newcommand{\score}{{{\mathrm{score}}}}
\newcommand{\totscore}{{{\mathrm{total\_score}}}}
\newcommand{\ranked}{{{\mathrm{ranked}}}}

\newcommand{\reals}{{{\mathbb{R}}}}

\newtheorem{theorem}{Theorem}

\newtheorem{corollary}[theorem]{Corollary}

\newtheorem{definition}{Definition}
\newtheorem{example}{Example}

\newcommand{\calF}{\mathcal{F}}
\newcommand{\calA}{\mathcal{A}}
\newcommand{\calD}{\mathcal{D}}
\newcommand{\calR}{\mathcal{R}}

\DeclareMathOperator*{\argmin}{arg\,min}
\DeclareMathOperator*{\best}{best}
\DeclareMathOperator*{\worst}{worst}
\DeclareMathOperator*{\avg}{avg}
\DeclareMathOperator*{\m}{mod}

\crefname{eq}{equality}{equalities}

\newcommand{\minimax}{{{{\mathrm{MM}}}}}

\newcommand{\trunc}{{{{\mathrm{trunc}}}}}
\newcommand{\prob}{{{{\mathrm{P}}}}}
\newcommand{\ind}{{{{\mathrm{ind}}}}}
\newcommand{\expected}{{{{\mathrm{E}}}}}
\newcommand{\pos}{{{{\mathrm{pos}}}}}

\renewcommand{\sc}{{{{\mathrm{sc}}}}}

\DeclareMathOperator{\occ}{occ}

\title{Comparing Election Methods Where Each Voter Ranks Only Few Candidates}

\author{Matthias Bentert\\
  TU Berlin\\
  Berlin, Germany
  \and 
Piotr Skowron\\
  University of Warsaw\\
  Warsaw, Poland
}

\bibliographystyle{alpha}
\sloppy
\date{}

\begin{document}

\maketitle

\begin{abstract}
Election rules are formal processes that aggregate voters preferences, typically to select a single candidate, called the winner. Most of the election rules studied in the literature require the voters to rank the candidates from the most to the least preferred one. This method of eliciting preferences is impractical when the number of candidates to be ranked is large. We ask how well certain election rules (focusing on positional scoring rules and the Minimax rule) can be approximated from partial preferences collected through one of the following procedures: (i) randomized---we ask each voter to rank a random subset of $\ell$ candidates, and (ii) deterministic---we ask each voter to provide a ranking of her $\ell$ most preferred candidates (the $\ell$-truncated ballot). We establish theoretical bounds on the approximation ratios, and we complement our theoretical analysis with computer simulations. We find that mostly (apart from the cases when the preferences have no or very little structure) it is better to use the randomized approach. While we obtain fairly good approximation guarantees for the Borda rule already for~$\ell = 2$, for approximating the Minimax rule one needs to ask each voter to compare a larger set of candidates in order to obtain good guarantees.    
\end{abstract}

\section{Introduction}

An election rule is a function that takes as input a collection of voters preferences over a given set of $m$ candidates and returns a single candidate, called the winner. There is a large variety of election rules known in the literature (we refer the reader to the survey by Zwicker~\cite{zwi:b:intro-voting} for an overview); most of them require the voters to provide strict linear orders over the candidates. Yet, it is often hard, or even infeasible for a voter to provide such a prefernce ranking, especially when the set of candidates is large. Indeed, it is often believed that a voter can rank at most five to nine candidates~\cite{Mil56}.

In this paper we ask how the quality of decisions made through voting depends on the amount of information available. Specifically, our goal is to assess the quality of outcomes of elections when each voter can be asked to rank at most $\ell < m$ candidates. We compare two ways of eliciting preferences. In the first approach---which we call \emph{randomized}---we ask each voter to rank a random subset of $\ell$ candidates. In the second approach---which we call \emph{deterministic}---we ask each voter to provide the ranking of her top $\ell$ most preferred candidates (the, so-called, $\ell$-truncated ballot). For a number of rules (we analyze positional scoring rules and the Minimax method), we investigate how well they can be approximated by algorithms that use one of the two elicitation methods.

\subsection*{Our Contribution}

Our contribution is the following:
\begin{enumerate}
\item In \Cref{sec:scoring_rules} we identify a class $\mathit{Sep}_{\ell}$ of positional scoring rules that, for a given $\ell$, can be well approximated using the randomized approach. $\mathit{Sep}_{2}$ consists of a single rule, namely the Borda count; the number of rules in $\mathit{Sep}_{\ell}$ grows exponentially with $\ell$. We theoretically prove approximation guarantees for the rules from $\mathit{Sep}_{\ell}$---these guarantees are more likely to be accurate when the number of voters is large---we analytically show how, in the worst case, the approximation guarantees depend on the number of voters. In \Cref{sec:randomized:minimax} we provide an analogous analytical analysis for the Minimax rule.

\item In \Cref{sec:deterministic} we prove upper-bounds on the approximation ratios of an algorithm that uses $\ell$-truncated ballots; we prove these bounds both for positional scoring rules and for the Minimax rule. In both cases, we show that the algorithm that minimizes the maximal regret of Lu and Boutilier~\cite{lu-bou:maximizing-regret} (we recall this algorithm in \Cref{sec:_determnistic_algorithm_pos}) matches our upper-bounds (for Minimax our analysis is tight up to a small constant factor).

\item We ran computer simulations in order to verify how the approximation ratio depends on the particular distribution of voters preferences (\Cref{sec:experiments}). Our experiments confirm that in most cases (with the exception of very unstructured preferences) the randomized approach is superior. We also show that usually only a couple of hundreds of voters are required to achieve a reasonably good approximation.    
\end{enumerate}

\subsection*{Related Work}

Our work contributes to the broad literature on handling incomplete information in voting---for a survey on this topic, we refer the reader to the book chapter by Boutilier and~Rosenschein~\cite{bou-ros:b:incomplete_voting}. Specifically, our research is closely related to the idea of minimizing the maximal regret~\cite{lu-bou:maximizing-regret}. Therein, for a partial preference profile $P$, the goal is to select a candidate $c$ such that the score of $c$ in the worst possible completion of $P$ is maximized. In particular, algorithms minimizing the maximal regret yield the best possible approximation ratio. Our paper complements this literature by (1) providing an accurate analysis of these approximation ratios for various methods (which allows to better judge suitability of different methods for handling incomplete information), and (2) by providing the analysis for two natural methods of preference elicitation (which also allows to assess which of the two methods is better).

Algorithms for minimizing the maximal regret interpret the missing information in the most pessimistic way: they assume the worst-possible completion of partial preferences. Other approaches include assuming the missing pairwise preferences to be distributed uniformly (e.g.\ Xia and Conitzer~\cite{xia-con:c:mle-partial-orders}) and machine-learning techniques (Doucette~\cite{Dou14,Dou15}) to ``reconstruct'' missing information (assuming that the missing pairwise comparisons are distributed similarly as in observed partial rankings).

Our work is also closely related to the literature on distortion~\cite{pro-ros:c:distortion, car-pro:j:embeddings, bou-etal}. There, an underlying utility model is assumed, and the goal is to estimate how well various voting rules that have only access to ordinal preferences, approximate optimal winners, i.e., candidates that maximize the total utility of the voters. The concept of distortion has recently received a lot of attention in the literature. The definition of distortion has for example been adapted to social welfare functions (where the goal is to output a ranking of candidates rather than a single winner)~\cite{ben-pro-qui:distortion_welfare_functions} and to participator budgeting~\cite{ben-nat-pro-sha:participatory_budgeting_elicitation}. Some works also study distortion assuming a certain structure of the underlying utility model (e.g., that it can be represented as a metric space)~\cite{ans-bha-elk-pos-sko:j:distortion, AP16, FFG, GKM16, GAX17}.   

Finally, we mention that our randomized algorithms are similar to the one proposed by Hansen~\cite{Han16}. The main difference is that the rule proposed by Hansen asks each voter to compare a certain number of pairs of candidates, while in our approach we ask each voter to rank a certain fixed-size subset of them. Hansen views his algorithm as a fully-fledged standalone rule (and compares it with other election systems, mostly focusing on assessing the probability of selecting the Condorcet winner), while our primary goal is to investigate how well our rules approximate their original counterparts. 

\section{Preliminaries}
An election is a pair $E = (V, C)$, where $V = \{v_1, v_2, \ldots, v_n\}$ and $C = \{c_1, c_2, \ldots, c_m\}$ denote the sets of $n$ \emph{voters} and $m$ \emph{candidates}, respectively. Each voter $v_i$ is endowed with a \emph{preference ranking} over the candidates, which is a total ordering of the candidates and which we denote by~$\succ_i$. For each candidate $c\in C$ by $\pos_i(c)$ we denote the position of $c$ in $v_i$'s preference ranking. The position of the most preferred candidate is one, of the second most preferred candidate is two,~etc. For example, for a voter $v_i$ with the preference $c_2 \succ_i c_3 \succ_i c_1$, we have $\pos_i(c_1) = 3$, $\pos_i(c_2) = 1$, and $\pos_i(c_3) = 2$.

For an integer $t$ we use $[t]$ to denote the set $\{1, 2, \ldots, t\}$ and we use the Iverson bracket notation---for a logical expression~$P$ the term $[P]$ means $1$ if $P$ is true and $0$ otherwise.

A \emph{voting rule} is a function that, for a given election $E$, returns a subset of candidates, which we call tied \emph{winning candidates}. Below we describe several (classes of) voting rules that we will focus on in this paper.

A \emph{positional scoring function} is a mapping $\lambda\colon [m] \to \reals$ that assigns to each position a real value: intuitively, $\lambda(p)$ is a score that a voter assigns to a candidate that she ranks as her $p$-th most preferred one. For each positional scoring function $\lambda$ we define the $\lambda$-score of a candidate $c$ as $\sc_{\lambda}(c) = \sum_{v_i \in V} \lambda(\pos_i(c))$, and the corresponding election rule selects the candidate(s) with the highest $\lambda$-score. Examples of common positional scoring rules include:
\begin{description}
\item[Borda rule:] Based on a linear decreasing positional scoring function, the Borda rule is formally defined by $\beta(p) = m-p$ for $p \in [m]$.
\item[Plurality rule:] Being equivalent to the $1$-approval rule, the positional scoring function for the Plurality rule assigns a score of one to the first position and zero to all others.
\end{description}
 
Another important class of voting rules origins from the Condorcet criterion. It says that if there exists a candidate $c$ that is preferred to any other candidate by a majority of voters, then the voting rule should select $c$. We focus on one particular rule satisfying the Condorcet criterion (we chose a rule picking the candidates that maximize a certain scoring function so that we could apply to the rule the standard definition of approximation):
\begin{description}
\item[Minimax rule.] For an election $E = (V, C)$ and two candidates $c, c' \in C$, we define $\sc_{\minimax}(c, c') = |\{v_i \in V \mid c \succ_i c'\}|$ as the number of voters who prefer $c$ to $c'$ and we set
\begin{align*}
\sc_{\minimax}(c) = \min_{c' \neq c} \{\sc_{\minimax}(c, c')\} \text{.}
\end{align*}
The rule then selects the candidates with the highest $\sc_{\minimax}$ score. 
\end{description}

Since all rules described above select the candidates with the maximal scores (with particular rules differing in how the score should be calculated), a natural definition of approximation applies.
\begin{definition} 
We say that $\calA$ is an $\alpha$-approximation algorithm for a rule $\calR$ if for each election instance $E$ it holds that:
\begin{align*}
\frac{\score_{\calR}(\calA(E))}{\max_{w \in \calR(E)}\score_{\calR}(w)} \geq \alpha \text{,}
\end{align*}
where $\score_{\calR}$ is a function representing the score $\calR$ awards each candidate, $\calR(E)$ is the set of winners returned by $\calR$, and $\calA(E)$ is the candidate returned $\calA$. 
\end{definition}  

Later on, we will consider algorithms that have access only to certain parts of the input instances. In such cases the above definition still applies. For example, let $\trunc(E, \ell)$ denote the $\ell$ truncated instance obtained from $E$, i.e., a partial election which for each voter contains her preferences ranking from $E$, truncated to the top $\ell$ positions. Then we say that $\calA$ is an $\alpha$-approximation algorithm for $\calR$ for $\ell$-truncated instances, when for each election instance $E$ it holds that:
\begin{align*}
\frac{\score_{\calR}(\calA(\trunc(E)))}{\max_{w \in \calR(E)}\score_{\calR}(w)} \geq \alpha \text{.}
\end{align*}

\section{Randomized Approach}\label{sec:randomized}

In this section we explore a randomized approach, where each voter can be asked to rank a random subset of candidates.  

\subsection{Scoring Rules}\label{sec:scoring_rules}

We start our analysis by looking at the class of positional scoring rules. For the sake of simplicity we will assume throughout this section that $n$ is divisible by $m$\footnote{We will always implicitly assume that $n$ is much larger than $m$, and we will use randomized algorithms only. Thus, if $m$ does not divide $n$, then in our algorithms we can add a preliminary step that randomly selects a set of $n' = \left\lfloor \frac{n}{m} \right\rfloor \cdot m$ voters, and ignores the remaining $n - n'$ ones. We mention that other authors also suggested to give multiple randomized ballots to each voter.}. We first present an algorithm that estimates the score of each candidate and picks the candidate with the highest score. The algorithm is parameterized with a natural number $\ell \leq m$ and a vector of $\ell$ reals $\alpha = (\alpha_1, \ldots, \alpha_{\ell})$---for a fixed vector $\alpha$ we will call the algorithm $\alpha$-\textsc{PSF-ALG}. This algorithm asks each voter to rank a random set of $\ell$ candidates. We say that a candidate $c$ is ranked by a voter $v$ if $c$ belongs to the set of $\ell$ candidates that $v$ was asked to rank. If $c$ is the $i$-th most preferred among the candidates ranked by a voter, then $c$ receives the score of $\alpha_i$ from the voter. Such scores are summed up for each candidate, normalized by the number of voters who ranked the respective candidate, and the candidate with the highest total score is declared the winner. Pseudcode of the algorithm is given in \Cref{alg:psf}.

\begin{algorithm}[t]
\ForEach{candidate~$c$}
{
$\totscore(c) \leftarrow 0$ 

$\ranked(c) \leftarrow 0$
}
\ForEach{voter~$v$}
{
$S_v \leftarrow$ random set of $\ell$ candidates 

ask $v$ to rank $S_v$
 
\ForEach{$c \in S_v$}
{

\If{$c$ is ranked $i$-th among $S_v$}{ 

$\totscore(c) \leftarrow \totscore(c) + \alpha_i$ 

$\ranked(c) \leftarrow \ranked(c) + 1$
}
}
}
\ForEach{candidate~$c$}
{
$\score(c) \leftarrow \frac{n \cdot \totscore(c)}{\ranked(c)}$ 

}
\Return{candidate $c$ with maximal $\score(c)$}
\caption{Algorithm $\alpha$-\textsc{PSF-ALG} for computing positional scoring functions.}
\label{alg:psf}
\end{algorithm}

Below, we will show that for some positional scoring rules, by choosing the vector~$\alpha$ carefully, we can find good approximations of winning candidates with high probability. First, through \Cref{thm:separable} we establish a relation between positional scoring functions $\lambda$ and vectors $\alpha$ that should be used to assess $\lambda$; the formula is not intuitive, and we will discuss it later on. In particular, we will explain which positional scoring functions can be well approximated using this approach, that is, we will discuss the structure of the class of positional scoring functions which are covered by the following theorem.

\begin{theorem}\label{thm:separable}
Fix a non-increasing sequence of~$\ell$ reals~$\alpha = (\alpha_1, \ldots, \alpha_{\ell})$ and consider the positional scoring function~$\lambda_{\alpha}$ defined by
\begin{align*}
\lambda_{\alpha}(p) = \frac{1}{{m-1 \choose \ell-1}} \cdot \sum_{i = 1}^{\ell}\alpha_i {p-1 \choose i-1} \cdot {m - p \choose \ell - i} \text{.}
\end{align*}
For a candidate $c \in C$ that is ranked by at least one voter, we denote by $X_{c}$ the random variable describing the total normalized score that $c$ was assigned by $\alpha$-\textsc{PSF-ALG}. 
Then, the expected value $\expected(X_c)$ is equal to the $\lambda_{\alpha}$-score of $c$, and the probability that the score computed by $\alpha$-\textsc{PSF-ALG} for $c$ differs from its expected value by a multiplicative factor of $1 \pm \epsilon$ is upper-bounded by~$2\exp\left(-\frac{\epsilon^2 \expected(X_c)}{3}\right)$, i.e.,
\begin{align*}
p_{\epsilon} = \prob\Big(\left| X_c - \expected(X_c) \right| \geq \epsilon \expected(X_c) \Big) \leq 2 \exp\left(- \frac{\epsilon^2 \ell \sc_{\lambda_{\alpha}}(c)}{6m\alpha_1}\right) \text{.}
\end{align*}
\end{theorem}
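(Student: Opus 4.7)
The plan is to write the score the algorithm assigns to $c$ as a sum of independent contributions from the voters, compute the per-voter expectation via a hypergeometric calculation, and close with a multiplicative Chernoff bound.

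First, I would work with the rescaled quantity $X_c := \tfrac{m}{\ell}\totscore(c) = \sum_{v \in V} Y_v$, where $Y_v = 0$ if $c \notin S_v$ and $Y_v = \tfrac{m}{\ell}\alpha_i$ if $c$ is ranked $i$-th inside $S_v$. Strictly the algorithm returns $n \cdot \totscore(c)/\ranked(c)$; however, $\ranked(c)$ is a sum of $n$ independent $\mathrm{Bernoulli}(\ell/m)$ indicators and thus concentrates sharply around its mean $n\ell/m$, so $\score(c)$ and $X_c$ agree up to a $(1 \pm O(\epsilon))$ multiplicative factor with overwhelming probability; the corresponding slack is absorbed into the constant of the final Chernoff bound.

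Second, to compute $\expected[Y_v]$, I condition on $v$'s full ranking. Since $S_v$ is a uniformly random $\ell$-subset of $C$, we have $\prob(c \in S_v) = \ell/m$, and conditional on $c \in S_v$ with $p = \pos_v(c)$, the remaining $\ell - 1$ elements of $S_v$ form a uniformly random $(\ell-1)$-subset of $C \setminus \{c\}$. The rank of $c$ within $S_v$ is one plus the number of those $\ell - 1$ elements drawn from the $p-1$ candidates that $v$ ranks above $c$, and is hence hypergeometrically distributed:
\begin{align*}
\prob\bigl(\text{rank of } c \text{ in } S_v = i \mid c \in S_v\bigr) = \frac{\binom{p-1}{i-1}\binom{m-p}{\ell-i}}{\binom{m-1}{\ell-1}}.
\end{align*}
Multiplying by $\tfrac{m}{\ell}\alpha_i$, weighting by the inclusion probability $\ell/m$, and summing over $i$ recovers exactly the expression defining $\lambda_{\alpha}(p)$. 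Thus $\expected[Y_v] = \lambda_{\alpha}(\pos_v(c))$, and summing over voters gives $\expected[X_c] = \sum_v \lambda_{\alpha}(\pos_v(c)) = \sc_{\lambda_{\alpha}}(c)$.

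Third, because the sets $S_v$ are drawn independently across voters, the $Y_v$ are independent; since $\alpha$ is non-increasing (and shifting $\alpha$ by an additive constant leaves the winner unchanged), we may assume $0 \le Y_v \le \tfrac{m}{\ell}\alpha_1$. Rescaling by $\tfrac{m\alpha_1}{\ell}$ yields independent $[0,1]$-valued random variables, so the standard two-sided multiplicative Chernoff bound gives
\begin{align*}
\prob\bigl(|X_c - \expected[X_c]| \ge \epsilon\, \expected[X_c]\bigr) \le 2\exp\!\left(-\frac{\epsilon^2\, \expected[X_c]}{3(m\alpha_1/\ell)}\right) = 2\exp\!\left(-\frac{\epsilon^2\, \ell\, \sc_{\lambda_{\alpha}}(c)}{3m\alpha_1}\right),
\end{align*}
matching the theorem up to the constant ($6$ rather than $3$ in the denominator), with the extra factor of $2$ absorbing the deviation coming from using $\ranked(c)$ instead of its expectation in the algorithm's normalization. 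The main obstacle is purely conceptual: recognizing that the baroque formula defining $\lambda_{\alpha}$ is nothing other than the hypergeometric expectation of $\alpha_{\text{rank}}$ under uniform $\ell$-subset sampling. Once this identification is made, the remaining ingredients---independence across voters, boundedness of $Y_v$, and Chernoff---are routine.
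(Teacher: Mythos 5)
Your hypergeometric identification of $\lambda_{\alpha}$ is exactly the paper's core computation (the counting argument showing $\sum_{S} [c \in S]\,\ind_v(S,c,i) = {\pos_v(c)-1 \choose i-1}{m-\pos_v(c) \choose \ell-i}$), and your surrogate analysis is clean. But there is a genuine gap: the random variable you analyze, $\frac{m}{\ell}\totscore(c)$, is not the one the theorem is about. The algorithm outputs $n\cdot\totscore(c)/\ranked(c)$, normalized by the \emph{random} count $\ranked(c)$, and both claims of the theorem concern that ratio estimator. Your bridge---``$\ranked(c)$ concentrates, so the slack is absorbed into the constant''---is asserted, not proven, and it is precisely where the paper does its real work: it conditions on $N_c = x$, observes that the inclusion probability $x/n$ cancels the normalization $n/x$ so that $\expected(X_c \mid N_c = x) = \sc_{\lambda_{\alpha}}(c)$ \emph{exactly for every} $x \geq 1$ (which is what makes the exact-expectation claim true for the ratio estimator; your argument establishes it only for the surrogate and never addresses $\expected\bigl[n\,\totscore(c)/\ranked(c)\bigr]$ at all), applies a Chernoff bound for \emph{negatively correlated} variables conditionally (once you condition on $\ranked(c)$, the per-voter contributions are no longer independent---your independence claim holds only for the surrogate), and finally averages the conditional tail bound over the binomial law of $N_c$.

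Quantitatively, the hand-wave also fails to deliver the stated constants: making your sketch rigorous via a union bound over the two bad events (deviation of $\ranked(c)$ and deviation of $\totscore(c)$) forces you to split $\epsilon$ between them, e.g.\ $\epsilon' \approx \delta \approx \epsilon/2$, which yields a bound of roughly $4\exp\bigl(-\epsilon^2 \ell\, \sc_{\lambda_{\alpha}}(c)/(12 m \alpha_1)\bigr)$---a prefactor of $4$ and an exponent constant near $12$, neither of which a ``factor of $2$'' can absorb to recover $2\exp\bigl(-\epsilon^2 \ell\,\sc_{\lambda_{\alpha}}(c)/(6m\alpha_1)\bigr)$. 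One smaller point: your ``WLOG shift $\alpha$ by an additive constant'' is not innocuous here, because the statement is a \emph{multiplicative} deviation bound and shifting $\alpha$ rescales $\expected(X_c)$ and changes the event $|X_c - \expected(X_c)| \geq \epsilon\,\expected(X_c)$; nonnegativity of the $\alpha_i$ is an implicit hypothesis (the paper silently assumes it too when it declares $X_{v,c,i}\alpha_i/\alpha_1 \in [0,1]$), so flag it as an assumption rather than a reduction. To repair your proof along the paper's lines: keep your hypergeometric lemma, but condition on $N_c = x$, exploit the exact cancellation to get the conditional expectation, invoke negative correlation for the conditional Chernoff step, and then sum over the binomial distribution of $N_c$.
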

\begin{proof}
Let us fix a candidate $c \in C$ who is ranked by at least one voter. The process of computing the score of $c$ according to \cref{alg:psf} can be equivalently described as follows. We first decide on the number~$x$ of voters we ask to rank $c$. Second, we pick uniformly at random a set $V'$ of $x$ voters such that all voters in~$V'$ are asked to rank~$c$ and all voters in~$V \setminus V'$ are not asked to rank~$c$. Finally, we ask each voter from $V'$ to rank $c$ and a randomly selected set of $\ell-1$ candidates.
Let $N_c$ be a random variable describing the number of voters who rank~$c$.
Further, for each voter $v$, let $X_{v, c, i}$ denote the random variable equal 1 if $c$ is the $i$-th candidate among those ranked by voter $v$ and zero otherwise. In particular, $X_{v, c, i}$ is zero when $v$ is not asked to rank $c$. 
Observe that if~$N_c = x$, then the value of~$X_c$ can be expressed as
\begin{align*}
X_{c} = \frac{n}{x} \cdot \sum_{v \in V} \sum_{i = 1}^\ell \alpha_i X_{v, c, i}  \text{.}
\end{align*}
Further, let $A_{V', c}$ be 1 if \emph{each} voter from $V'$ ranks $c$ and 0 otherwise. Similarly, let $A_{v, c}$ be~1 if $v$ ranks $c$ and 0 otherwise.  Let $\ind_v(S, c, i)$ be equal to 1 if $c$ is ranked as the $i$-th most preferred candidate among $S$ by $v$ and 0 otherwise.
We next compute the conditional expected value $\expected(X_{c}| N_c = x)$.
We first give the formal equalities and give reasoning for the more complicated ones afterwards.
\begin{align}
\expected(X_{c}| N_c = x) &= \expected\left(\frac{n}{x} \cdot \sum_{v \in V} \sum_{i = 1}^\ell \alpha_i X_{v, c, i} | N_c = x \right)
                          = \frac{n}{x} \cdot \sum_{v \in V} \sum_{i = 1}^\ell \alpha_i \expected( X_{v, c, i} | N_c = x) \label{equ:lam0}\\
\begin{split}
                          &= \frac{n}{x} \cdot \sum_{v \in V} \sum_{i = 1}^\ell \alpha_i \sum_{\substack{V'\subseteq V \\ |V'| = x \\ v\in V'}}\prob(A_{V', c} = 1| N_c = x) \expected( X_{v, c, i} | A_{V', c} = 1) \\
                          &= \frac{n}{x} \cdot \sum_{i = 1}^\ell \alpha_i \sum_{v \in V} \sum_{\substack{V'\subseteq V \\ |V'| = x}} \frac{1}{{n \choose x}} [v \in V'] \expected( X_{v, c, i} | A_{V', c} = 1) \\  
                          &= \frac{n}{x} \cdot \sum_{i = 1}^\ell \alpha_i \sum_{v \in V} \sum_{\substack{V'\subseteq V \\ |V'| = x}} \frac{1}{{n \choose x}} [v \in V'] \prob(A_{v, c} = 1 | A_{V', c} = 1) \expected( X_{v, c, i} | A_{v, c} = 1) \\
\end{split}\label{equ:lam0b}\\
                          &= \frac{n}{x} \cdot \sum_{i = 1}^\ell \alpha_i \sum_{v \in V} \frac{1}{{n \choose x}} \cdot {n-1\choose x-1} \expected( X_{v, c, i} | A_{v, c} = 1) \label{equ:lam1} \\     
                          &= \sum_{i = 1}^\ell \alpha_i \sum_{v \in V} \expected( X_{v, c, i} | A_{v, c} = 1) \label{equ:lam1b}\\     
				&= \sum_{i = 1}^{\ell}\alpha_i \sum_{v \in V} \frac{1}{{m - 1 \choose \ell - 1}}  \sum_{\substack{S \subseteq C\\ |S| = \ell}} [c \in S] \cdot \ind_v(S, c, i) \label{equ:lam3}\\
				&= \frac{1}{{m - 1 \choose \ell - 1}} \sum_{v \in V} \sum_{i = 1}^{\ell}\alpha_i {\pos_v(c)-1 \choose i-1} \cdot {m - \pos_v(c) \choose \ell - i} \label{equ:lam4}\\
				&= \sum_{v \in V} \lambda_{\alpha}(\pos_v(c)) = \sc_{\lambda_{\alpha}}(c) \label{equ:lam5} \text{.}   
\end{align}

We will now explain some of the equalities in the above sequence. (\ref{equ:lam1}) is an effect of regrouping the summands; each summand $\expected( X_{v, c, i} | A_{v, c} = 1)$ in the previous line is added for each set $V' \subseteq V$ of size $x$ which includes $v$---there are ${n-1 \choose x-1}$ such sets and $[v \in V'] \cdot \prob(A_{v, c} = 1 | A_{V', c} = 1)$ is the same as~$[v \in V']$.
(\ref{equ:lam3}) holds for the following reason: A voter who ranked $c$ was asked to rank some set of~$\ell$ candidates including~$c$. Each possible set has the same probability of being selected, thus this probability is~$\nicefrac{1}{{m-1 \choose \ell-1}}$.
(\ref{equ:lam4}) is true as we will show that~$\sum_{S \subseteq C} [c \in S] [|S| = \ell] \ind_v(S,c,i) = {\pos_v(c)-1 \choose i-1} \cdot {m - \pos_v(c) \choose \ell - i}$. Consider a fixed voter $v$, a fixed candidate $c$, and a set $S\subseteq C$ such that 
\begin{inparaenum}[(i)]
\item $c \in S$,
\item $|S| = \ell$, and
\item $v$ considers $c$ to be her~$i$-th most preferred candidate from $S$.
\end{inparaenum}
Each such a set must consist of~$i-1$ candidates that are ranked before~$c$ by~$v$ and~$\ell-i$ candidates that are ranked after~$c$. Thus, there are~${\pos_v(c)-1 \choose i-1} \cdot {m - \pos_v(c) \choose \ell - i}$ such sets. We refer to \cref{fig:crank} for an illustration.

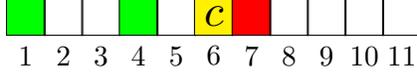
\begin{figure}
\centering
\begin{tikzpicture}
\node[draw,rectangle, minimum size=1, label=below:$1$, scale=2, fill=green] at (0,0) (n1) {};
\node[draw,rectangle, minimum size=0.5, label=below:$2$, scale=2] at (0.5,0) (n2) {};
\node[draw,rectangle, minimum size=0.5, label=below:$3$, scale=2] at (1,0) (n3) {};
\node[draw,rectangle, minimum size=0.5, label=below:$4$, scale=2, fill=green] at (1.5,0) (n4) {};
\node[draw,rectangle, minimum size=0.5, label=below:$5$, scale=2] at (2,0) (n5) {};
\node[draw,rectangle, minimum size=0.5, label=below:$6$, fill=yellow] at (2.5,0) (n6) {\scalebox{1.55}{$c$}};
\node[draw,rectangle, minimum size=0.5, label=below:$7$, scale=2, fill=red] at (3,0) (n7) {};
\node[draw,rectangle, minimum size=0.5, label=below:$8$, scale=2] at (3.5,0) (n1) {};
\node[draw,rectangle, minimum size=0.5, label=below:$9$, scale=2] at (4,0) (n1) {};
\node[draw,rectangle, minimum size=0.5, label=below:$10$, scale=2] at (4.5,0) (n1) {};
\node[draw,rectangle, minimum size=0.5, label=below:$11$, scale=2] at (5,0) (n1) {};
\end{tikzpicture}
\caption{An illustration explaining that there are~${\pos_v(c)-1 \choose i-1} \cdot {m - \pos_v(c) \choose \ell - i}$ possible sets~$S$ that satisfy~$c \in S$,~$|S| = \ell$ and~$c$ is ranked at position~$i$ in~$S$ by~$v$. Each field indicates a candidate and the number below a field indicates the position. The yellow field indicates~$c$, a red field indicates a candidate in~$S$ that was ranked after~$c$ by~$v$ and a green field indicates a candidate in~$S$ that was ranked before~$c$ by~$v$. A white field indicates a candidate that was not ranked by~$v$. In this example~${m=11},\ i = 3,\ \pos_v(c) = 6$, and~$\ell = 4$.}
\label{fig:crank}
\end{figure}

Next, we will use the Chernoff's inequality to assess the probability that the computed score of a candidate~$c$ does not differ from its true score by a factor of~$\epsilon$. We will first assess the conditional probability $\prob\Big(\left| X_c - \expected(X_c) \right| \geq \epsilon\expected(X_c) | N_c = x \Big)$. Observe that the conditional variables $\{X_{v, c, i} | N_c\}_{v \in V, i \in [\ell]}$ are not independent. For instance, if $X_{v, c, i} = 1$, then $X_{v, c, j} = 0$ for each $j \neq i$. However, they are all negatively correlated---intuitively meaning that if a variable becomes 1 (resp., 0), then the other variables are less likely to become 1 (resp., 0).
Thus, we can still apply the Chernoff's bound~\cite[Theorem 1.16, Corollary 1.10]{auger2011theory} which states that for any negatively-correlated random variables~$X_1,\ldots,X_n \in [0,1]$ such that~$X = \sum_{i=1}^n X_i$ and any~$\delta \in [0,1]$ it holds that
\begin{align}
\prob(X \leq (1-\delta) \expected(X)) \leq \exp(-\delta^2\expected(X)/2) \text{ and }
\prob(X \geq (1+\delta) \expected(X)) \leq \exp(-\delta^2\expected(X)/3). \label{ref:chernoff}
\end{align}
It follows immediately that~$\prob(|X - \expected(X)| \geq \delta \expected(X)) \leq 2\exp\left(-\delta^2\expected(X)/3\right)$.

Now, consider the variables $\frac{X_{v, c, i}\alpha_i}{\alpha_1}$. These variables are from $[0, 1]$ and from \cref{equ:lam0,equ:lam0b,equ:lam1,equ:lam1b,equ:lam3,equ:lam4,equ:lam5} we get that
\begin{align*}
\expected\left(\sum_{v \in V} \sum_{i = 1}^\ell \frac{X_{v, c, i}\alpha_i}{\alpha_1}| N_c = x \right) = \frac{x}{n\alpha_1}\sc_{\lambda_{\alpha}}(c) \text{.}
\end{align*}

This yields
\begin{align*}
&\prob\Big(\left| X_c - \expected(X_c) \right| \geq \epsilon\expected(X_c) | N_c = x \Big) \\
&\qquad = \prob\Big(\left|  \frac{x}{n\alpha_1} X_c - \expected\left( \frac{x}{n\alpha_1} X_c\right) \right| \geq \epsilon\expected\left( \frac{x}{n\alpha_1}X_c\right) | N_c = x \Big)
\overset{\cref{ref:chernoff}}\leq 2\exp\left(- \frac{\epsilon^2 x \sc_{\lambda_{\alpha}}(c)}{3n\alpha_1}\right).
\end{align*}

Finally, using the binomial identity~$(x+y)^n = \sum_{k=1}^{n} {n \choose k}x^k y^{n-k}$ we get
\begin{align*}
\prob\Big(\left| X_c - \expected(X_c) \right| \geq\epsilon\expected(X_c)\Big) &= \sum_{x = 0}^{n} \prob(N_c = x) \cdot \prob\Big(\left| X_c - \expected(X_c) \right| \geq \epsilon\expected(X_c) | N_c = x \Big) \\
          &\leq  \sum_{x = 0}^{n} {n \choose x} \cdot \frac{{{m-1 \choose \ell-1}}^x \left({m \choose \ell} - {m-1 \choose \ell -1}\right)^{n-x}}{{{m \choose \ell}}^n} \cdot 2\exp\left(- \frac{\epsilon^2 x \sc_{\lambda_{\alpha}}(c)}{3n\alpha_1}\right) \\
          &= 2 \exp\left(- \frac{\epsilon^2 \sc_{\lambda_{\alpha}}(c)}{3n\alpha_1}\right) \cdot \sum_{x = 0}^{n} {n \choose x} \cdot \left(\frac{\ell}{m}\right)^x \left(1 - \frac{\ell}{m}\right)^{n-x} \cdot e^{-x} \\
          &= 2 \exp\left(- \frac{\epsilon^2 \sc_{\lambda_{\alpha}}(c)}{3n\alpha_1}\right) \cdot \left(1 - \frac{\ell}{m} + \frac{\ell}{em} \right)^n \\
          &\leq 2 \exp\left(- \frac{\epsilon^2 \sc_{\lambda_{\alpha}}(c)}{3n\alpha_1}\right) \cdot \left(1 - \frac{\ell}{2m} \right)^n \\
          &\leq 2 \exp\left(- \frac{\epsilon^2 \sc_{\lambda_{\alpha}}(c)}{3n\alpha_1}\right) \cdot e^{-\frac{\ell n}{2m}} = 2 \exp\left(- \frac{\epsilon^2 \ell \sc_{\lambda_{\alpha}}(c)}{6m\alpha_1}\right) \text{.}
\end{align*}
This concludes the proof.
\end{proof}

Now, let us discuss the form of positional scoring functions $\lambda_{\alpha}(p)$ used in the statement of \Cref{thm:separable}. First, observe that for $\ell = 2$, if we set $\alpha_2 = 0$ and $\alpha_1 = 1$ we have that $\lambda_{\alpha}(p) = {p-1 \choose 0} \cdot {m - p \choose 2 - 1} = m - p = \beta(p)$. This means that by asking each voter to rank only two candidates, we can correctly (in expectation) assess the Borda scores of the candidates.
   
\begin{corollary}\label{cor:Borda_rand}
For a candidate $c$ the expected value of the score computed by Algorithm $(1, 0)$-\textsc{SEP-ALG} for $c$ is the Borda score of $S$.
\end{corollary}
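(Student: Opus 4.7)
The plan is to obtain this corollary as a direct specialization of \Cref{thm:separable} to the parameter choices $\ell = 2$ and $\alpha = (1, 0)$. \Cref{thm:separable} already delivers the main content we need: for any candidate $c$ that is ranked by at least one voter, the expected value of the random variable $X_c$ describing the total normalized score returned by $\alpha$-\textsc{PSF-ALG} equals $\sc_{\lambda_\alpha}(c)$. So the only remaining work is to identify the particular scoring function $\lambda_\alpha$ that arises from our choice of $\alpha$ and to verify that it coincides (up to a positive scalar) with the Borda scoring function $\beta$.

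Concretely, I would substitute $\ell = 2$, $\alpha_1 = 1$, and $\alpha_2 = 0$ into the formula
\[
\lambda_\alpha(p) = \frac{1}{\binom{m-1}{\ell-1}} \sum_{i=1}^{\ell} \alpha_i \binom{p-1}{i-1}\binom{m-p}{\ell-i}.
\]
The $i = 2$ summand vanishes since $\alpha_2 = 0$, leaving only the $i = 1$ term, which is $\binom{p-1}{0}\binom{m-p}{1} = m - p$, and the normalizing factor reduces to $\binom{m-1}{1} = m - 1$. Hence $\lambda_\alpha(p) = (m-p)/(m-1)$, which is a positive scalar multiple of $\beta(p) = m - p$.

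The only mild subtlety worth flagging is this constant factor $1/(m-1)$: strictly speaking $\sc_{\lambda_\alpha}(c) = \sc_\beta(c)/(m-1)$ rather than literal equality. Since positional scoring rules are unchanged under multiplication by a positive constant, $\lambda_\alpha$ and $\beta$ induce the same ranking and the same set of winners, so in the sense used throughout this section $\expected(X_c)$ may be identified with the Borda score of $c$. There is no real obstacle; the work is entirely in reading off $\lambda_\alpha$ from \Cref{thm:separable}, and the corollary is really just the explicit observation that the Borda rule is the (essentially unique) positional scoring rule recoverable in expectation from random pairwise rankings.
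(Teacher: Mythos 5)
Your proposal is correct and takes essentially the same route as the paper, which likewise obtains the corollary by specializing \Cref{thm:separable} to $\ell = 2$ and $\alpha = (1,0)$ and reading off $\lambda_{\alpha}$. You are in fact slightly more careful than the paper's in-text computation, which silently drops the normalizing factor $\nicefrac{1}{\binom{m-1}{1}} = \nicefrac{1}{m-1}$ and writes $\lambda_{\alpha}(p) = m-p = \beta(p)$ literally; your observation that the identity holds only up to this positive scalar---which is immaterial since positional scoring rules are invariant under positive rescaling---is the accurate reading.
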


Unfortunately, not every positional scoring function can be efficiently assessed while asking each voter to rank only few candidates. For example, we can generalize \Cref{cor:Borda_rand} and show that for any vector of two elements $\alpha = (\alpha_1, \alpha_2)$, the algorithm $\alpha$-\textsc{SEP-ALG} can only compute scores that are affine transformations of the Borda scores (thus, for $\ell=2$ the algorithm can only be used to approximate the Borda rule). 

We will now describe the class of all positional scoring functions which can be computed correctly in expectation by our algorithm for any fixed~$\ell$.
Since each positional scoring function is based on some~$m$-dimensional vector~$\beta = (\beta_1,\beta_2,\ldots, \beta_m)$ which can be expressed as~$\sum_{i=1}^{m} \alpha_i \cdot \beta_i$, where~$\alpha_1 = (1,0,\ldots)$, $\alpha_2 = (0,1,0,\ldots)$ and so on, these~$\alpha$-vectors form a basis of the linear space of positional scoring functions.

Let $\mathit{Sep}_{\ell} = \{\lambda_{\alpha} \colon \alpha \in \reals^\ell\}$ be the set of all positional scoring functions that can be computed (correctly in expectation) by our algorithm for a fixed $\ell$.
Since it holds for each two $\ell$-element vectors $\alpha, \alpha' \in \reals^{\ell}$ that~$\lambda_{\alpha + \alpha'}(p) = \sum_{i = 1}^{\ell}(\alpha_i + \alpha'_i) {p-1 \choose i-1} {m - p \choose \ell - i} = \sum_{i = 1}^{\ell}\alpha_i {p-1 \choose i-1} {m - p \choose \ell - i} + \sum_{i = 1}^{\ell}\alpha'_i {p-1 \choose i-1} \cdot {m - p \choose \ell - i} = \lambda_{\alpha}(p) + \lambda_{\alpha'}(p),$ we have that $\mathit{Sep}_{\ell}$ is a linear space too.
 
Thus, $\mathit{Sep}_{\ell}$ is an $\ell$-dimensional linear subspace of the $m$-dimensional space of all positional scoring functions, and so we can compactly describe it by providing $\ell$ scoring functions forming a basis of $\mathit{Sep}_{\ell}$.
\Cref{fig:depicted_basis} visually illustrates the scoring functions forming a basis for $\ell \in \{2, 4, 8\}$.
In other words, for a given value of $\ell$, we can use \Cref{thm:separable} to correctly compute (in expectation) all scoring functions which can be obtained as linear combinations of the scoring functions depicted in \Cref{fig:depicted_basis}.
   
\begin{figure}[!t]
\minipage{0.32\textwidth}
  \centering
  \includegraphics[width=\linewidth]{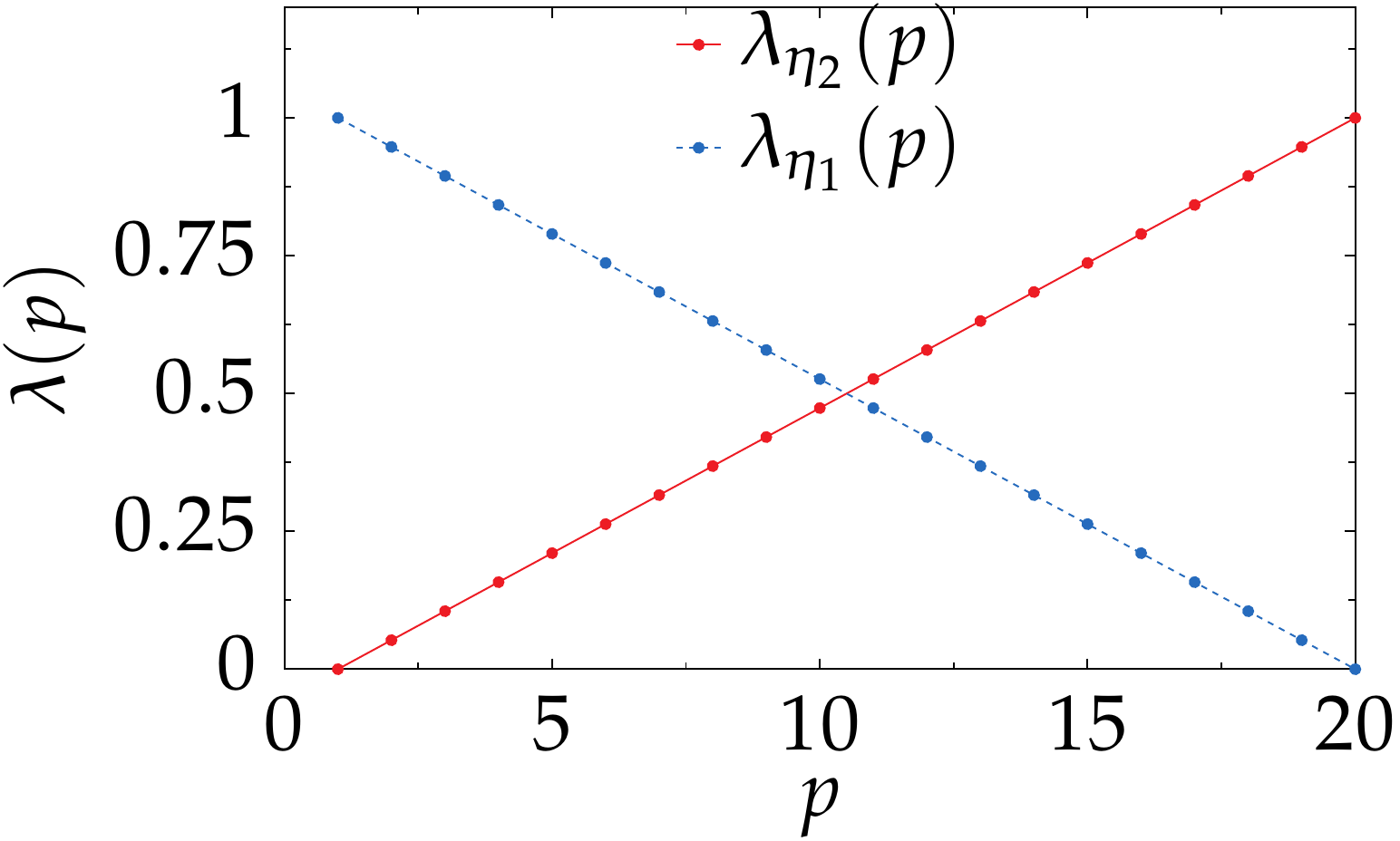}
  (a) $\ell = 2$
\endminipage\hfill
\minipage{0.32\textwidth}
  \centering
  \includegraphics[width=\linewidth]{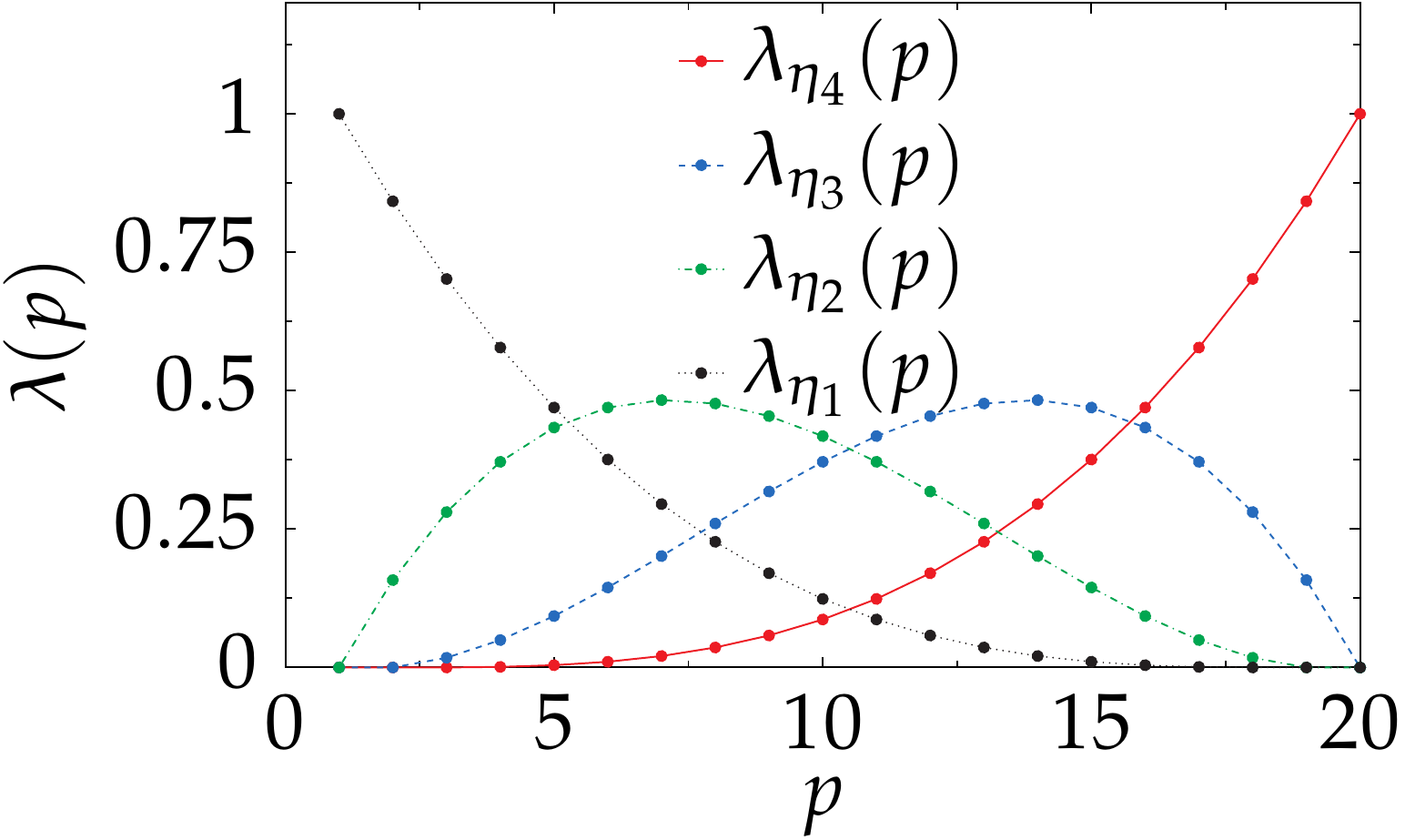}
  (b) $\ell = 4$
\endminipage\hfill
\minipage{0.32\textwidth}
  \centering
  \includegraphics[width=\linewidth]{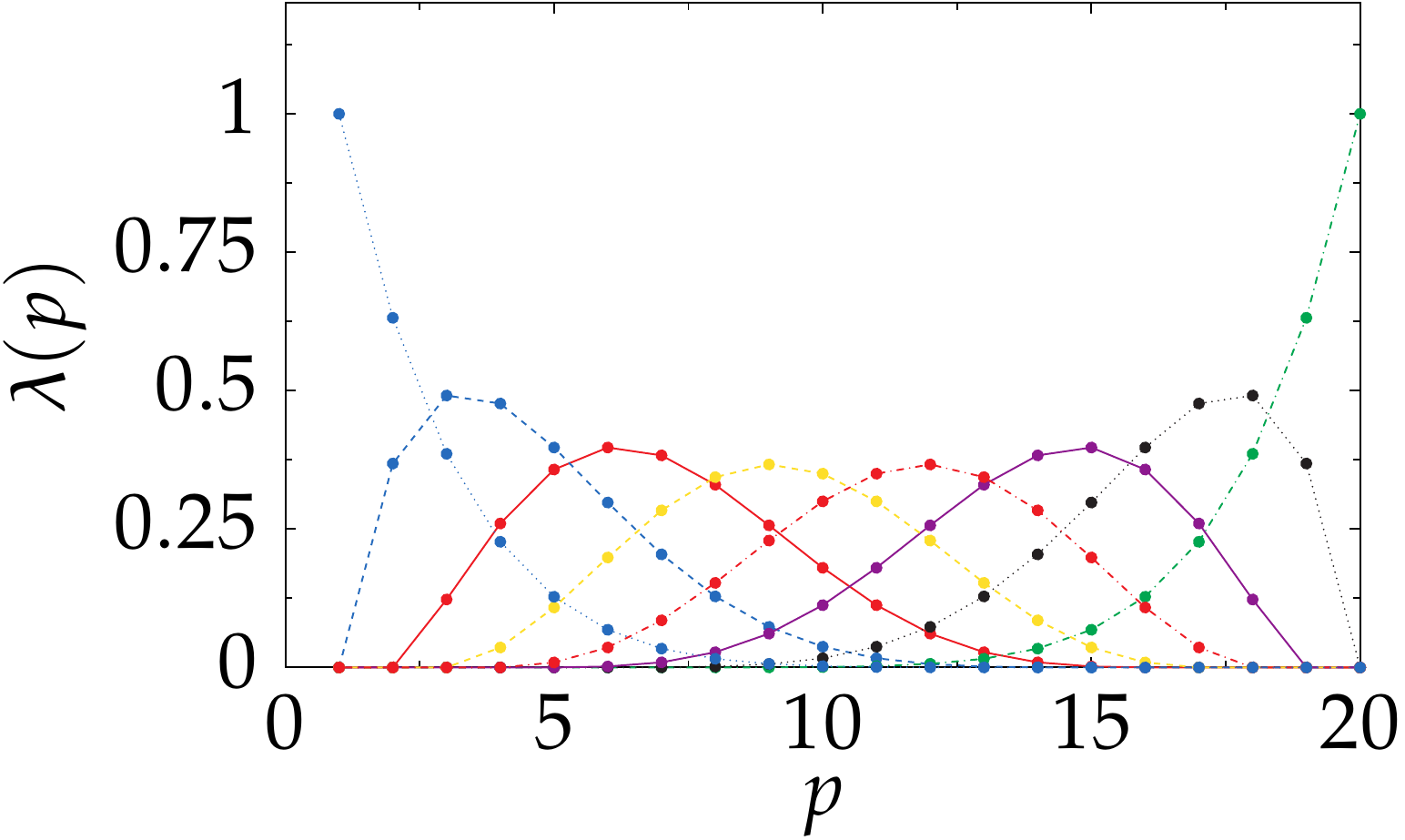}
  (c) $\ell = 8$
\endminipage
\caption{A basis of the space $\mathit{Sep}_{\ell} = \{\lambda_{\alpha} \colon \alpha \in \reals^\ell\}$ for three different values of $\ell$ (to compute the basis we took vectors $\alpha \in \{\eta_i \colon i \in [\ell]\}$, where $\eta_i$ is a vector with 1 in the $i$-th position, and 0 in the remaining $\ell-1$ positions). Each plot illustrates $\ell$ PSFs that span the space of all PSFs which can be correctly assessed by algorithm $\alpha$-\textsc{PSF-ALG} while asking each voter to rank only $\ell$ candidates.}\label{fig:depicted_basis}
\end{figure}

Finally, let us give some intuition regarding the probabilities assessed in \Cref{thm:separable}. For example, for $m = 21$ candidates and $n$ voters the Borda score of a winning candidate is at least $10n$. Assume that we want to ask each voter to compare only two candidates, and set $\epsilon = 0.01$. When assessing the score of a winning candidate, to get $p_{\epsilon} < 0.001$ we need about 72 thousands voters. For one million voters, this probability drops below $\nicefrac{4}{10^{42}}$.

Finally, note that \Cref{thm:separable} applies to any candidate, not only to election winners. This makes the result slightly more general, since it also applies to e.g., social welfare functions, where the goal is to output a ranking of the candidates instead of a single winner. 
 
\subsection{Minimax Rule}\label{sec:randomized:minimax}

We will now investigate whether the Minimax rule can be well approximated when each voter is only asked to rank a few candidates.
We will use an algorithm similar to \Cref{alg:psf}: each voter $v$ ranks a subset of candidates $S_v$ and whenever two candidates $c, c' \in S_v$ are ranked by a voter~$v$, we use her preference list to estimate $\sc_{\minimax}(c, c')$.
Notably, we scale the values $\sc_{\minimax}(c, c')$ for each two candidates $c, c' \in C$ by the number of times they were compared and use these normalized values to compute the Minimax winner.
This algorithm is formalized in \Cref{alg:minimax}.

\begin{algorithm}[t]
\ForEach{candidates~$c, c' \in C$}
{
$\mathrm{S}[c, c'] \leftarrow 0$ 
}
\ForEach{voter~$v$}
{
$S_v \leftarrow$ random set of $\ell$ candidates 

ask $v$ to rank $S_v$
 
\ForEach{$c \in S_v$}
{
\ForEach{$c' \in S_v \setminus \{c\}$}
{
\If{$c \succ_i c'$}{ 

$\mathrm{S}[c, c'] \leftarrow \mathrm{S}[c, c'] + 1$ 
}
}
}
}
\ForEach{candidate~$c$}
{
$\mathrm{S}[c] \leftarrow n\cdot \min \Big\{\frac{\mathrm{S}[c, c']}{\mathrm{S}[c, c'] + \mathrm{S}[c', c]}\colon c' \neq c, \mathrm{S}[c, c'] + \mathrm{S}[c', c] > 0 \Big\}$ 

}
\Return{candidate $c$ with maximal $\mathrm{S}[c]$}
\caption{Algorithm for computing winners according to the Minimax rule.}
\label{alg:minimax}
\end{algorithm}

\begin{theorem}
For each candidate $c \in C$ the probability that the total normalized score computed by \Cref{alg:minimax} for $c$ differs from the true Minimax score of $c$ by a multiplicative factor of at least $1 \pm \epsilon$ is upper-bounded by:
\begin{align*}
m\exp\left(- \frac{\epsilon^2 \ell^2 \sc_{\minimax}(c, c_{\min})}{6m^2}\right)
\end{align*}
\end{theorem}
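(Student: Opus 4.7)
The plan is to fix a candidate $c$ and bound, for each $c' \neq c$, the deviation of the estimate $n\cdot \mathrm{S}[c,c']/(\mathrm{S}[c,c']+\mathrm{S}[c',c])$ from the true pairwise score $\sc_{\minimax}(c,c')$, and then union-bound over the $m-1$ choices of $c'$. For each voter $v$, I define the indicator $Y_{v,c,c'}=[\{c,c'\}\subseteq S_v]\cdot[c\succ_v c']$. Because the sets $S_v$ are drawn independently across voters, the variables $\{Y_{v,c,c'}\}_v$ are mutually independent, and $\mathrm{S}[c,c']=\sum_v Y_{v,c,c'}$. An elementary counting argument gives $\prob(\{c,c'\}\subseteq S_v)=\binom{m-2}{\ell-2}/\binom{m}{\ell}=\ell(\ell-1)/(m(m-1))$, so $\expected[\mathrm{S}[c,c']]=\tfrac{\ell(\ell-1)}{m(m-1)}\sc_{\minimax}(c,c')$ and, summing both orientations, $\expected[\mathrm{S}[c,c']+\mathrm{S}[c',c]]=\tfrac{\ell(\ell-1)}{m(m-1)}n$.

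Next, I apply the same Chernoff bound used in the proof of \Cref{thm:separable} separately to $\mathrm{S}[c,c']$ and to $\mathrm{S}[c,c']+\mathrm{S}[c',c]$, both of which are sums of independent $\{0,1\}$-valued variables. Choosing the deviation parameter slightly smaller than $\epsilon$ and using $\sc_{\minimax}(c,c')\leq n$ to absorb the denominator's tail into the numerator's, both failure probabilities are dominated by an expression of the form $\exp(-\Omega(\epsilon^2\ell^2\sc_{\minimax}(c,c')/m^2))$. On the complementary good event, elementary manipulation of the ratio shows that $n\cdot \mathrm{S}[c,c']/(\mathrm{S}[c,c']+\mathrm{S}[c',c])$ lies within a $1\pm\epsilon$ multiplicative factor of $\sc_{\minimax}(c,c')$.

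Finally, I union-bound over all $m-1$ candidates $c' \neq c$. On the resulting joint good event every pairwise estimate is accurate up to a $1\pm\epsilon$ factor, and hence so is their minimum; the latter equals $\sc_{\minimax}(c)=\sc_{\minimax}(c,c_{\min})$. The weakest exponent in the union bound arises precisely at $c'=c_{\min}$ (since $\sc_{\minimax}(c,c')\geq \sc_{\minimax}(c,c_{\min})$ for every other $c'$), so the sum of $m-1$ failure probabilities is dominated by $m\exp(-\Omega(\epsilon^2\ell^2\sc_{\minimax}(c,c_{\min})/m^2))$, which matches the stated bound. I expect the main obstacle to be the bookkeeping required to combine the two multiplicative deviations (numerator and denominator) into a single clean $1\pm\epsilon$ statement for the ratio while simultaneously recovering the exact constant $1/6$ and the prefactor $m$; a cruder combination still yields the same shape of bound but with slacker constants.
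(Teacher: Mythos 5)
Your route is correct in substance but genuinely different from the paper's. You concentrate the numerator $\mathrm{S}[c,c']$ and the denominator $\mathrm{S}[c,c']+\mathrm{S}[c',c]$ separately---both are sums of indicators that are mutually independent across voters, so the plain Chernoff bound suffices---and then combine the two deviations into a statement about the ratio before union-bounding over $c'$. The paper instead conditions on the number $N_{c,c'}=x$ of voters who compare $c$ and $c'$: given $N_{c,c'}=x$ the denominator is deterministically $x$, so $X_{c,c'}=\frac{n}{x}\sum_{v} X_{v,c,c'}$ is a single scaled sum with conditional expectation exactly $\sc_{\minimax}(c,c')$; one Chernoff application at the \emph{full} deviation $\epsilon$ (to the negatively correlated conditional variables) plus the law of total probability over $x$, evaluated in closed form via the binomial identity, yields exactly $2\exp\left(-\epsilon^2\ell^2\sc_{\minimax}(c,c')/(6m^2)\right)$ per pair. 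What the conditioning buys is precisely the stated constants: your split forces a shrunken deviation parameter (e.g.\ $\delta=\epsilon/3$, since $(1+\delta)/(1-\delta)\leq 1+\epsilon$ requires $\delta\leq\epsilon/(2+\epsilon)$) and roughly four tail events per pair, so as written you land at a bound of the form $O(m)\exp\left(-\epsilon^2\ell^2\sc_{\minimax}(c,c_{\min})/(Cm^2)\right)$ with $C$ substantially larger than $6$ and prefactor about $4m$ rather than $m$---a shortfall you correctly anticipate. Two smaller points: the paper shaves the prefactor to exactly $m$ by treating the two tails of the minimum asymmetrically (the upper tail needs only the single pair $(c,c_{\min})$, while the lower tail takes the union over all $c'$ using $\sc_{\minimax}(c,c')\geq\sc_{\minimax}(c,c_{\min})$, the same monotonicity you invoke); and your joint good event implicitly handles the pairs with $\mathrm{S}[c,c']+\mathrm{S}[c',c]=0$ that \Cref{alg:minimax} excludes from the minimum, since on that event every denominator is positive---worth making explicit. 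If you need the theorem with its stated constants, adopt the conditioning device; your argument as it stands proves the bound only up to constant factors in the exponent and prefactor.
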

\begin{proof}
First, let us fix a pair of candidates $c, c' \in C$ and let $X_{c, c'}$ be the random variable describing the value $n\cdot \frac{\mathrm{S}[c, c']}{\mathrm{S}[c, c'] + \mathrm{S}[c', c]}$ as computed by \Cref{alg:minimax}. Similarly as in the proof of \Cref{thm:separable} we can express $X_{c, c'}$ as a sum of negatively correlated random variables.

Specifically, computing $\mathrm{S}[c, c']$ according to \cref{alg:minimax} can be equivalently described as follows:
First, we decide on how many voters will be asked to compare $c$ and $c'$.
Let $N_{c,c'}$ be the random variable describing this number of voters.
Second, assuming $N_{c,c'} = x$, we pick uniformly at random a set $V'$ of $x$ voters, and we ask them to compare $c$ and $c'$.
For each voter $v$, let $X_{v, c, c'}$ denote the random variable equal 1 if voter $v$ said that she prefers $c$ to $c'$, and 0 otherwise. In particular, $X_{v, c, c'}$ is zero when $v$ is not asked to compare $c$ and $c'$. 
Observe that if~$N_{c,c'} = x$, then $\mathrm{S}[c, c'] + \mathrm{S}[c', c] = x$, and so the value of~$X_{c, c'}$ can be expressed as
\begin{align*}
X_{c, c'} = \frac{n}{x} \cdot \sum_{v \in V} X_{v, c, c'}  \text{.}
\end{align*}
We next compute the conditional expected value $\expected(X_{c}| N_{c,c'} = x)$:
\begin{align*}
\expected(X_{c, c'}| N_{c,c'} = x) &= \expected\left(\frac{n}{x} \cdot \sum_{v \in V} X_{v, c, c'} | N_{c, c'} = x \right)
                          = \frac{n}{x} \cdot \sum_{v \in V} \expected( X_{v, c, c'} | N_{c, c'} = x) \\
                          &= \frac{n}{x} \cdot \sum_{v \in V} \sum_{V'\subseteq V\colon |V'| = x}[v \in V']\cdot[c \succ_v c'] \\
                          &= \frac{n}{x} \cdot \sum_{v \in V} \frac{{n-1 \choose x-1}}{{n \choose x}} [c \succ_v c'] \\  
                          &= \sum_{v \in V}  [c \succ_v c'] = \sc_{\minimax}(c, c') \text{.}
\end{align*}

Next, we will use the Chernoff's inequality to upper-bound the probability that the value of random variable $X_{c, c'}$ does not differ from its expected value by a factor of~$\epsilon$. We first look at the conditional probability $\prob\Big(\left| X_{c, c'} - \expected(X_{c, c'}) \right| \geq \epsilon\expected(X_{c, c'}) | N_{c, c'} = x \Big)$. As in the proof of \Cref{thm:separable}, we note that the conditional variables $\{X_{v, c, c'} | N_{c, c'} = x\}_{v \in V}$ are not independent, yet they are all negatively correlated---the fact that one variable becomes 1 (resp., 0) can only decrease the probabilities that some other becomes 1 (resp., 0).
Thus, we can still apply the Chernoff's bound~\cite[Theorem 1.16, Corollary 1.10]{auger2011theory}, which states that for any negatively-correlated random variables~$X_1,\ldots,X_n \in [0,1]$ such that~$X = \sum_{i=1}^n X_i$ and any~$\delta \in [0,1]$ it holds that
\begin{align*}
\prob(|X - \expected(X)| \geq \delta \expected(X)) \leq 2\exp\left(-\delta^2\expected(X)/3\right).
\end{align*}
Since, for each $v \in V$, we have $X_{v, c, c'} \in \{0, 1\}$, and  $\expected\left(\sum_{v \in V} X_{v, c, c'} | N_{c, c'} = x \right) = \frac{x}{n}\sc_{\minimax}(c, c')$, we get that:
\begin{align*}
\prob\Big(\left| X_{c, c'} - \expected(X_{c, c'}) \right| \geq \epsilon\expected(X_{c, c'}) | N_{c, c'} = x \Big)  \leq 2\exp\left(- \frac{x\epsilon^2\sc_{\minimax}(c, c')}{3n}\right)
\end{align*}

Next, we get that:
\begin{align*}
&\prob\Big(\left| X_{c, c'} - \expected(X_{c, c'}) \right| \geq\epsilon\expected(X_{c, c'})\Big)  \\
&\qquad =\sum_{x = 0}^{n} \prob(N_{c, c'} = x) \cdot \prob\Big(\left| X_{c, c'} - \expected(X_{c, c'}) \right| \geq \epsilon\expected(X_{c, c'}) | N_{c, c'} = x \Big) \\
&\qquad \leq \sum_{x = 1}^{n} \prob(N_{c, c'} = x) \cdot 2\exp\left(- \frac{x\epsilon^2\sc_{\minimax}(c, c')}{3n}\right).
\end{align*}

Notice that~$\prob(N_{c, c'} = x)$ can be represented by the following.
First we decide for~$x$ out of~$n$ voters to rank~$c$ and~$c'$.
We then ask these~$x$ voters to rank~$l-2$ out of~$m-2$ remaining candidates and ask all other~$n-x$ voters to \emph{not} rank~$c$ \emph{and}~$c'$.
This can be modeled by~${m \choose \ell} - {m-2 \choose \ell -2}$ as~$m \choose \ell$ is the total number of possible sets to ask a voter to rank and as discussed before, there are~$m-2 \choose \ell -2$ sets that contain~$c$ and~$c'$.
Hence
\begin{align*}
 &\prob(N_{c, c'} = x) = {n \choose x} \cdot \frac{{{m-2 \choose \ell - 2}}^{x} \left( {m \choose \ell} - {m-2 \choose \ell -2} \right)^{n-x}}{{m \choose \ell}} \text{ and thus} \\
&\qquad \sum_{x = 1}^{n} \prob(N_{c, c'} = x) \cdot 2\exp\left(- \frac{x\epsilon^2\sc_{\minimax}(c, c')}{3n}\right) \\
&\qquad = \sum_{x = 1}^{n} {n \choose x} \cdot \frac{{{m-2 \choose \ell - 2}}^{x} \left( {m \choose \ell} - {m-2 \choose \ell -2} \right)^{n-x}}{{{m \choose \ell}}^n} \cdot 2\exp\left(- \frac{x\epsilon^2\sc_{\minimax}(c, c')}{3n}\right) \\
&\qquad = 2\exp\left(- \frac{\epsilon^2\sc_{\minimax}(c, c')}{3n}\right) \cdot \sum_{x = 1}^{n} {n \choose x} \cdot \left(\frac{\ell(\ell - 1)}{m(m-1)}\right)^x \cdot \left(1 - \frac{\ell(\ell - 1)}{m(m-1)}\right)^{n-x} \cdot \left(\frac{1}{e}\right)^x.
\end{align*}

Again using the binomial identity~$(x+y)^n = \sum_{k=1}^{n} {n \choose k}x^k y^{n-k}$, we get
\begin{align*}
&2\exp\left(- \frac{\epsilon^2\sc_{\minimax}(c, c')}{3n}\right) \cdot \sum_{x = 1}^{n} {n \choose x} \cdot \left(\frac{\ell(\ell - 1)}{m(m-1)}\right)^x \cdot \left(1 - \frac{\ell(\ell - 1)}{m(m-1)}\right)^{n-x} \cdot \left(\frac{1}{e}\right)^x\\
&\qquad = 2\exp\left(- \frac{\epsilon^2\sc_{\minimax}(c, c')}{3n}\right) \cdot \left(1 - \frac{\ell(\ell - 1)}{m(m-1)} + \frac{\ell(\ell - 1)}{em(m-1)}\right)^{n}\\
&\qquad = 2\exp\left(- \frac{\epsilon^2\sc_{\minimax}(c, c')}{3n}\right) \cdot \left(1 - \frac{\ell(\ell - 1)(e-1)}{m(m-1)e} \right)^{n} \\
&\qquad \leq 2\exp\left(- \frac{\epsilon^2\sc_{\minimax}(c, c')}{3n}\right) \cdot \left(1 - \frac{\ell^2}{2m^2} \right)^{n} \\
&\qquad \leq 2\exp\left(- \frac{\epsilon^2\sc_{\minimax}(c, c')}{3n}\right) \cdot \exp\left(- \frac{\ell^2}{2m^2}n \right) \\
&\qquad \leq 2\exp\left(- \frac{\epsilon^2 \ell^2 \sc_{\minimax}(c, c')}{6m^2}\right) 
\end{align*}

Finally, let $c_{\min} = \argmin_{c' \neq c} \sc_{\minimax}(c, c')$. The probability that for candidate $c$ the score computed by \Cref{alg:minimax} differs from its true Minimax score by a multiplicative factor of at least $1 \pm \epsilon$ is upper-bounded by
$$ \prob\Big(\min_{c' \neq c}X_{c, c'}  < \sc_{\minimax}(c, c_{\min})(1 - \epsilon) \Big) + \prob\Big(\min_{c' \neq c}X_{c, c'}  > \sc_{\minimax}(c, c_{\min})(1 + \epsilon) \Big).$$
Clearly, we have:
\begin{align*}
\prob\Big(\min_{c' \neq c}X_{c, c'}  > \sc_{\minimax}(c, c_{\min})(1 + \epsilon) \Big) &\leq \prob\Big(X_{c, c_{\min}}  > \sc_{\minimax}(c, c_{\min})(1 + \epsilon) \Big) \\
&\leq \exp\left(- \frac{\epsilon^2 \ell^2 \sc_{\minimax}(c, c_{\min})}{6m^2}\right) \text{.}
\end{align*}
Further, since by definition for each $c'$ we have~$\sc_{\minimax}(c, c') \geq \sc_{\minimax}(c, c_{\min})$, it holds that:
\begin{align*}
\prob\Big(\min_{c' \neq c}X_{c, c'}  < \sc_{\minimax}(c, c_{\min})(1 - \epsilon) \Big) &\leq \sum_{c' \neq c}\prob\Big(X_{c, c'}  < \sc_{\minimax}(c, c_{\min})(1 - \epsilon) \Big)\\
&\leq \sum_{c' \neq c}\prob\Big(X_{c, c'}  < \sc_{\minimax}(c, c')(1 - \epsilon) \Big)\\
&\leq \sum_{c' \neq c}\exp\left(- \frac{\epsilon^2 \ell^2 \sc_{\minimax}(c, c')}{6m^2}\right) \\
&\leq \sum_{c' \neq c}\exp\left(- \frac{\epsilon^2 \ell^2 \sc_{\minimax}(c, c_{\min})}{6m^2}\right) \\
&\leq (m-1) \exp\left(- \frac{\epsilon^2 \ell^2 \sc_{\minimax}(c, c_{\min})}{6m^2}\right) \text{.}
\end{align*}
Thus we get that
\begin{align*}
&\prob\Big(\min_{c' \neq c}X_{c, c'}  < \sc_{\minimax}(c, c_{\min})(1 - \epsilon) \Big) + \prob\Big(\min_{c' \neq c}X_{c, c'}  > \sc_{\minimax}(c, c_{\min})(1 + \epsilon) \Big) \\
&\leq m\exp\left(- \frac{\epsilon^2 \ell^2 \sc_{\minimax}(c, c_{\min})}{6m^2}\right)\text{, completing the proof.} \qedhere
\end{align*}
\end{proof}

\section{Deterministic Approach ($\ell$-Truncated Elections)}\label{sec:deterministic}
When not asking each voter about each candidate, one always has to decide whether each voter is asked about random candidates or about specific ones.
On the one hand, asking about specific positions in preference rankings, allows one to focus on the top ones that seem to contain more relevant information; especially when the goal is to select the winner, who---intuitively---is more likely to appear in top positions. On the other hand, asking voters about random candidates might be more advantageous as the input may contain dependencies between candidates that are not known a priori.

In this section we investigate the case when each voter is asked about her~$\ell$ most preferred candidates.
We will then describe an algorithm that is guaranteed to approximate the true score at least as good as any other algorithm and analyze its performance for Borda and for Minimax.
We will then show a general lower bound on the approximation ratio, that is, we show that no algorithm can approximate the true score in the worst case arbitrarily good and see that it matches the bound for Borda and almost matches the bound we computed for Minimax.

\subsection{The Best Approximation Algorithm for $\ell$-Truncated Elections}\label{sec:_determnistic_algorithm_pos}
Let us start by describing the algorithm that for each $\ell$-truncated instance gives the best possible approximation guarantee, that is, the best approximation of the true winner in the worst-case full preference profile that induces the given $\ell$-truncated instance.
We mention that the idea of this algorithm is very similar to the one behind the algorithms for minimizing the maximal regret~\cite{lu-bou:maximizing-regret}, yet the analysis of the approximation ratio of the algorithm is new to this paper.

Consider an election $E$ and let $E_{\ell}$ be the $\ell$-truncated instance obtained from $E$.
Observe that when given $E_{\ell}$ and choosing a winner, the worst case occurs if the picked winner is ranked at the very last position by all voters that did not rank this candidate in $E$ among the first~$\ell$ positions, and the true winner (that our algorithm did not pick) is ranked at position~$\ell + 1$ by each voter who did not rank this candidate.

For each candidate $c$ we compute two scores: The worst possible score that $c$ is guaranteed to get (denoted by $\worst(c)$)---this score is obtained when $c$ is ranked last whenever it is not among the top-$\ell$ positions---and the best possible score that $c$ can get (denoted by $\best(c)$)---the score obtained by ranking~$c$ at position~$\ell+1$ whenever it is not ranked among the first~$\ell$ positions.
Let~$a,b_1,$ and~$b_2$ be the candidates with the highest~$\worst$, the highest~$\best$, and the second highest~$\best$ score, respectively.
If any candidate~$c \neq b_1$ is declared winner, then we can guarantee an approximation ratio of~$\nicefrac{\worst(c)}{\best(b_1)}$, which is clearly maximized by~$c = a$.

If candidate~$b_1$ is declared winner, then we can guarantee an approximation ratio of~$\nicefrac{\worst(b_1)}{\best(b_2)}$.
Thus, an optimal approximation ratio is achieved by an algorithm that computes all the possible scores and then checks whether~$a$ or~$b_1$ guarantees a better result.
If $\nicefrac{\worst(a)}{\best(b_1)} \geq \nicefrac{\worst(b_1)}{\best(b_2)}$ it declares $a$ the winner and otherwise it picks $b_1$.

We first show that these two different cases (sometimes choosing $a$ and other times choosing~$b_1$ guaranteeing a better approximation ratio) can both occur.
Afterwards we analyze the guarantees given by each of the two cases and conclude with a family of instances that prove that the obtained results are tight.

\begin{example}
 Consider the following two instances, each with~$5$~voters~$\{v_1,v_2,v_3,v_4,v_5\}$, $6$~candidates~$\{a,b,c,d,e,f\}$, and~$\ell = 3$:
 \begin{align*}
 I_1: && \quad I_2: \\
 v_1 &: b > e > a &  \quad v_1 &: b > c > a \\
 v_2 &: b > e > a &  \quad v_2 &: b > d > a \\
 v_3 &: b > d > a &  \quad v_3 &: c > e > a \\
 v_4 &: c > e > a &  \quad v_4 &: e > a > b \\
 v_5 &: d > a > c &  \quad v_5 &: d > e > a \text{.}
\end{align*}
Assume our goal is to find the candidate that would best approximate the Borda winner. The scores are presented in the two tables below:
\begin{center}
\begin{tabular}{c|c|c}
$I_1$ & $\worst$ & $\best$\\\hline
$a$ & $16$ & $16$\\
$b$ & $15$ & $19$\\
$c$ & $8$ & $14$\\
$d$ & $9$ & $15$\\
$e$ & $12$ & $16$\\
$f$ & $0$ & $10$
\end{tabular}\hspace{1cm}
\begin{tabular}{c|c|c}
$I_2$ & $\worst$ & $\best$\\\hline
$a$ & $16$ & $16$\\
$b$ & $13$ & $17$\\
$c$ & $9$ & $15$\\
$d$ & $9$ & $15$\\
$e$ & $13$ & $17$\\
$f$ & $0$ & $10$
\end{tabular}.
\end{center}
\vspace{.2cm}

In both instances~$a$ is the candidate with the highest~$\worst$ score and~$b$ is (among) the candidates with highest~$\best$ score.
In instance~$I_1$, it is best to declare~$b$ winner as it guarantees an approximation ratio of~$\nicefrac{15}{16} > \nicefrac{16}{19}$.
In instance~$I_2$ on the other hand, it is best to declare~$a$ winner since it guarantees an approximation ratio of~$\nicefrac{16}{17} > \nicefrac{13}{16}$.
Notice that the second term in each inequality is the approximation ratio guaranteed by choosing the respective other candidate.
\end{example}

Interestingly, while our algorithm provides the best possible approximation, it can select a candidate that is not a possible winner, i.e., that is not a winner in any profile consistent with the truncated ballot at hand.

\begin{example}
	Consider the following instance with~$5$ candidates~$a,b,c,d,e$, four voters~$v_1,v_2,\ldots, v_4$, and~$\lambda = (3,1,1,1,0)$.
	\begin{align*}
		v_1\colon b \succ a\\
		v_2\colon c \succ a\\
		v_3\colon d \succ a\\
		v_4\colon e \succ a
	\end{align*}
	It holds that~$\worst(a) = 4$,~$\worst(b) = \worst(c) = \worst(d) = \worst(e) = 3$,~$\best(a)=4$ and~$\best(b) = \best(c) = \best(d) = \best(e) = 6$ and therefore declaring~$a$ winning achieves the best approximation ratio.
	However, in any election that is consistent with the given truncated election, at least two candidates in~$\{b,c,d,e\}$ get at least 5 points while~$a$ always gets 4 points.
	Thus,~$a$ is not a possible winner.
\end{example}

\subsection{Positional Scoring Rules: Approximation Guarantees for $\ell$-Truncated Elections}

In this section we continue our analysis of the algorithm from \Cref{sec:_determnistic_algorithm_pos}, focusing on how well it approximates positional scoring rules having only access to $\ell$-truncated elections.
We will now prove guarantees that each of the two rules (choosing the candidate with highest~$\worst$ respectively~$\best$ score) provide.

\begin{theorem}\label{thm:positional_general_res}
Let $\calR$ be a positional scoring rule defined by the scoring function $\lambda(i) = \alpha_i$. The algorithm from \Cref{sec:_determnistic_algorithm_pos} for $\ell$-truncated elections gives an approximation guarantee of
\begin{align*}
\frac{\sum_{i=1}^{\ell} \alpha_i}{m \alpha_{\ell+1} + \frac{\alpha_1 - \alpha_{\ell+1}}{\alpha_1} \sum_{i=1}^{\ell} \alpha_i} \text{.}
\end{align*}
\end{theorem}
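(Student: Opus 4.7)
The plan is to reduce the theorem to a single lower bound on the ratio $\worst(a)/\best(b_1)$, and then to control this ratio by combining an averaging inequality with a pigeonhole inequality.

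First, I would argue that the algorithm's approximation ratio is always at least $\worst(a)/\best(b_1)$, so it suffices to lower-bound this quantity. Indeed, if the algorithm returns $a$, then in the worst case the true winner $w^*$ satisfies $\score(w^*)\le\best(w^*)\le\best(b_1)$, giving a ratio at least $\worst(a)/\best(b_1)$; and if it returns $b_1$, the tie-breaking rule used by the algorithm forces $\worst(b_1)/\best(b_2)\ge\worst(a)/\best(b_1)$, and the achieved ratio is $\worst(b_1)/\best(b_2)$.

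For every candidate $c$ define the \emph{visible score} $\sigma(c)$ (the $\lambda$-score $c$ receives from voters who place her in their top $\ell$) and the \emph{visibility count} $k_c$ (the number of such voters). Every voter distributes a total visible score of $\Sigma_\ell:=\sum_{i=1}^{\ell}\alpha_i$ among $\ell$ candidates, so $\sum_c\sigma(c)=n\Sigma_\ell$. Assuming, as is standard, that $\alpha_m\ge 0$, I would then derive three inequalities: first, by averaging, $\worst(a)\ge \tfrac{1}{m}\sum_c\worst(c)\ge\tfrac{n\Sigma_\ell}{m}$; second, by definition of $a$, $\worst(a)\ge\worst(b_1)\ge\sigma(b_1)$; and third, using $\sigma(b_1)\le k_{b_1}\alpha_1$ (equivalently $k_{b_1}\ge\sigma(b_1)/\alpha_1$),
\begin{align*}
\best(b_1)=\sigma(b_1)+(n-k_{b_1})\alpha_{\ell+1}\le\frac{\alpha_1-\alpha_{\ell+1}}{\alpha_1}\sigma(b_1)+n\alpha_{\ell+1}.
\end{align*}

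Putting the three bounds together yields
\begin{align*}
\frac{\worst(a)}{\best(b_1)}\ge\frac{\max\!\left(\tfrac{n\Sigma_\ell}{m},\ \sigma(b_1)\right)}{\tfrac{\alpha_1-\alpha_{\ell+1}}{\alpha_1}\sigma(b_1)+n\alpha_{\ell+1}},
\end{align*}
and I minimize the right-hand side over $\sigma(b_1)\ge 0$. When $\sigma(b_1)\le n\Sigma_\ell/m$ the numerator is constant while the denominator is increasing, so the minimum is attained at $\sigma(b_1)=n\Sigma_\ell/m$. When $\sigma(b_1)\ge n\Sigma_\ell/m$ the expression equals $1/\bigl(\tfrac{\alpha_1-\alpha_{\ell+1}}{\alpha_1}+n\alpha_{\ell+1}/\sigma(b_1)\bigr)$, which is increasing in $\sigma(b_1)$; again the minimum is at $\sigma(b_1)=n\Sigma_\ell/m$. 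Substituting this value produces exactly the formula stated in the theorem.

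The main obstacle, I expect, is recognizing that two \emph{different} lower bounds on $\worst(a)$ are needed: the averaging bound is too weak when $\sigma(b_1)$ is large, while the bound $\worst(a)\ge\sigma(b_1)$ is vacuous when $\sigma(b_1)$ is small. The theorem's exact constant emerges at the crossover $\sigma(b_1)=n\Sigma_\ell/m$ where the two bounds coincide, and both must therefore be carried through the derivation.
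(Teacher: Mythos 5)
Your proof is correct, and it is built from the same two core estimates as the paper's: the averaging bound $\worst(a)\ge\nicefrac{n}{m}\cdot\sum_{i=1}^{\ell}\alpha_i$ and the upper bound $\best(b_1)\le\frac{\alpha_1-\alpha_{\ell+1}}{\alpha_1}\sigma(b_1)+n\alpha_{\ell+1}$ derived from $k_{b_1}\ge\sigma(b_1)/\alpha_1$. But the architecture is genuinely different, in two ways that both work in your favor. First, the paper treats the two branches of the algorithm separately: it computes a second, weaker guarantee $\mathrm{approx}_2$ for the branch that returns $b_1$ and then needs an auxiliary algebraic lemma (the ``for all $a,b,c,d$ with $\frac{(d-c)a}{c}\ge b$ it holds that $\frac{c}{d}\ge\frac{c-a}{d-a-b}$'' step) to compare the two expressions; your opening observation---that the achieved ratio always dominates $\worst(a)/\best(b_1)$, because the algorithm returns $b_1$ only when $\worst(b_1)/\best(b_2)$ is at least as large---collapses everything into a single bound and renders the paper's entire second case superfluous. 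Second, your explicit minimization over $\sigma(b_1)$ with the crossover at $\sigma(b_1)=n\Sigma_\ell/m$ repairs a step the paper glosses over: in the paper's displayed chain, the inequality
\begin{align*}
\frac{\worst(b)}{\worst(b)+\bigl(n-\frac{\worst(b)}{\alpha_1}\bigr)\alpha_{\ell+1}}\;\ge\;\frac{\avg_w}{\avg_w+\bigl(n-\frac{\avg_w}{\alpha_1}\bigr)\alpha_{\ell+1}}
\end{align*}
invokes monotonicity of an increasing function of $\worst(b)$ and hence presumes $\worst(b)\ge\avg_w$, which need not hold (the candidate maximizing $\best$ can have a below-average $\worst$ score). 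The correct repair is exactly your two-bound argument: when $\sigma(b_1)$ is below the crossover one keeps the numerator $\avg_w$ and enlarges the denominator, and when it is above one uses $\worst(a)\ge\sigma(b_1)$; your closing remark that neither bound alone suffices is precisely the point. A final minor advantage: working with the visible score $\sigma(b_1)$ rather than $\worst(b_1)$ keeps the bound on $\best(b_1)$ exact even when $\alpha_m>0$, whereas the paper's version is clean only for $\alpha_m=0$. Like the paper, you do rely on the standing assumptions that the scores are non-increasing and nonnegative with $\alpha_1>0$, which you state explicitly.
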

\begin{proof}
Let us first assume that the algorithm picks the candidate~$a$ with the highest~$\worst$ score as a winner. The average~$\worst$ score each candidate gets is~$\avg_w = \nicefrac{n}{m} \cdot \sum_{i=1}^{\ell} \alpha_i$. Thus, $\worst(a) \geq \nicefrac{n}{m} \cdot \sum_{i=1}^{\ell} \alpha_i$. Let $b$ be the candidate, different from $a$ that has maximal~$\best$ score. Clearly, $\worst(b) \leq \worst(a)$. Further, if we fix $\worst(b)$, then $\best(b)$ is maximized when candidate~$b$ is ranked among the top $\ell$ positions (the positions that are counted for $\worst(b)$) as few times as possible. This way the number of voters who do not rank $b$ is maximized---and these are the voters who can contribute additional score (apart from $\worst(b)$) to $\best(b)$. That is, if $\worst(b)$ is fixed, then $\best(b)$ is higher when $b$ is ranked high by fewer voters rather than when it is ranked lower but by more voters. Consequently, we can lower bound $\best(b)$ by:
\begin{align*}
\worst(b) + \left(n - \frac{\worst(b)}{\alpha_1}\right)\alpha_{\ell+1}
\end{align*}
Thus, the approximation ratio in this case is at least
\begin{align*}
\mathrm{approx}_1 &\geq \frac{\worst(a)}{\best(b)} \geq \frac{\worst(a)}{\worst(b) + (n - \frac{\worst(b)}{\alpha_1}) \alpha_{\ell+1}} \\
&\geq \frac{\worst(b)}{\worst(b) + (n - \frac{\worst(b)}{\alpha_1}) \alpha_{\ell+1}} \geq \frac{\avg_w}{\avg_w + (n - \frac{\avg_w}{\alpha_1}) \alpha_{\ell+1}} \\
&\geq \frac{\sum_{i=1}^{\ell} \alpha_i}{m \alpha_{\ell+1} + \frac{\alpha_1 - \alpha_{\ell + 1}}{\alpha_1} \sum_{i=1}^{\ell} \alpha_i} \text{.}
\end{align*}

\medskip
We next turn to the case when the algorithm picks the candidate $b$ with maximum $\best$ score. The average best score of a candidate is given by
\begin{align*}
{\avg}_b = \frac{n}{m} \cdot ((m-\ell)\alpha_{\ell + 1} + \sum_{i=1}^{\ell}\alpha_i) \text{.}
\end{align*}
Clearly, $\best(b) \geq \avg_b$. With a fixed $\best(b)$ the $\worst$ score of $b$ is minimized when $b$ is ranked by as few voters as possible (the reasoning is similar as in the previous case). If $b$ is ranked first by~$x$~voters, then its $\best$ score would be $\alpha_1 x + (n-x)\alpha_{\ell+1}$. By solving:
\begin{align*}
\alpha_1 x + (n-x)\alpha_{\ell+1} = \best(b),
\end{align*}
we get that $x = \frac{\best(b) - n \alpha_{\ell+1}}{\alpha_1 - \alpha_{\ell+1}}$. Thus, $b$ gets the $\worst$ score of at least $\alpha_1\frac{\best(b) - n \alpha_{\ell+1}}{\alpha_1 - \alpha_{\ell+1}}$.
Consequently, we can lower-bound the approximation ratio by:
\begin{align*}
\mathrm{approx}_2 &\geq \frac{\worst(b)}{\best(b)} \geq \frac{\alpha_1\frac{\best(b) - n \alpha_{\ell+1}}{(\alpha_1 - \alpha_{\ell+1})}}{\best(b)} \geq \frac{\alpha_1\frac{{\avg}_b - n \alpha_{\ell+1}}{(\alpha_1 - \alpha_{\ell+1})}}{{\avg}_b} \\
&= \frac{\sum_{i=1}^{\ell}\alpha_i - \ell \alpha_{\ell+1}}{((m-\ell)\alpha_{\ell + 1} + \sum_{i=1}^{\ell}\alpha_i)\cdot \frac{\alpha_1 - \alpha_{\ell+1}}{\alpha_1}} \text{.}
\end{align*}

\noindent
It is easy to verify that for all~$a,b,c,d$ with~$0 \leq a \leq c \leq d$,~$b \geq 0$, and~$\frac{(d-c)a}{c} \geq b$ it holds that~$\frac{c}{d} \geq \frac{c-a}{d-a-b}$.
Substituting into this~$a = \ell \alpha_{\ell+1}$,~$b = (m-\ell) \frac{\alpha_{\ell+1}^2}{\alpha_1}$,~$c = \sum_{i=1}^{\ell}\alpha_i$ and~$d=m\alpha_{\ell+1} + \frac{\alpha_1-\alpha_{\ell+1}}{\alpha_1}\sum_{i=1}^{\ell}\alpha_i$, we get that
\begin{align*}
\frac{\sum_{i=1}^{\ell} \alpha_i}{m \alpha_{\ell+1} + \frac{\alpha_1 - \alpha_{\ell+1}}{\alpha_1} \sum_{i=1}^{\ell} \alpha_i} >  \frac{\sum_{i=1}^{\ell}\alpha_i - \ell \alpha_{\ell+1}}{((m-\ell)\alpha_{\ell + 1} + \sum_{i=1}^{\ell}\alpha_i)\cdot \frac{\alpha_1 - \alpha_{\ell+1}}{\alpha_1}} \text{.}
\end{align*}
We will only show that~$c \leq d$ and~$\frac{(d-c)a}{c} \geq b$ as everything else is trivial.
Observe that
\begin{align*}
c \leq c + \alpha_{\ell+1} (m-\ell) \leq c + m \alpha_{\ell+1} - \alpha_{\ell+1} \sum_{i=1}^{\ell} \frac{\alpha_i}{\alpha_1}=d  \text{ and}\\
\frac{(d-c)a}{c} = \frac{m - \sum_{i=1}^{\ell} \frac{\alpha_i}{\alpha_1}}{\sum_{i=1}^{\ell} \alpha_i} \ell \alpha_{\ell+1}^2 \geq \frac{m - \ell}{\ell \alpha_1} \ell \alpha_{\ell+1}^2 = b.
\end{align*}
Since the algorithm always picks the value that results in a higher ratio, we get the thesis. 
\end{proof}

\Cref{thm:positional_general_res} gives a very general result that applies to any positional scoring rule. For instance, for $k$-approval we get the approximation ratio of $\nicefrac{\ell}{m}$.

\begin{corollary}
The algorithm from \Cref{sec:_determnistic_algorithm_pos} for $k$-approval with $\ell$-truncated elections, $k > \ell$, gives the approximation guarantee of $\nicefrac{\ell}{m}$.
\end{corollary}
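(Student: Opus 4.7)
The plan is to simply apply \Cref{thm:positional_general_res} with the positional scoring function for $k$-approval and simplify. Recall that $k$-approval is defined by $\alpha_i = 1$ for $i \in [k]$ and $\alpha_i = 0$ for $i > k$. The hypothesis $k > \ell$ is used in exactly one, but crucial, place: it guarantees $\ell + 1 \leq k$, and hence $\alpha_{\ell+1} = 1 = \alpha_1$.

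First I would note the three quantities that enter the bound. Since $\alpha_i = 1$ for every $i \in [\ell]$, we have $\sum_{i=1}^{\ell}\alpha_i = \ell$. Since $\alpha_1 = 1$ and, by $k > \ell$, also $\alpha_{\ell+1} = 1$, the ``correction factor'' vanishes: $\frac{\alpha_1 - \alpha_{\ell+1}}{\alpha_1} = 0$. Thus the denominator of the ratio from \Cref{thm:positional_general_res} collapses to $m\alpha_{\ell+1} = m$.

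Substituting these three observations into the ratio
\[
\frac{\sum_{i=1}^{\ell} \alpha_i}{m \alpha_{\ell+1} + \frac{\alpha_1 - \alpha_{\ell+1}}{\alpha_1} \sum_{i=1}^{\ell} \alpha_i}
\]
yields $\ell / m$, which is exactly the claimed approximation guarantee. There is no real obstacle here beyond tracking where the assumption $k > \ell$ is used; the entire argument is a one-line substitution into the general formula of \Cref{thm:positional_general_res}.
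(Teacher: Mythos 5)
Your proof is correct and is essentially identical to the paper's: both instantiate the formula of \Cref{thm:positional_general_res} with the $k$-approval scores $\alpha_1=\cdots=\alpha_k=1$, using $k>\ell$ to get $\alpha_{\ell+1}=\alpha_1=1$ so that the denominator collapses to $m$, yielding $\nicefrac{\ell}{m}$. Your write-up is in fact slightly more explicit than the paper's one-line substitution, since you point out exactly where the hypothesis $k>\ell$ is needed.
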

\begin{proof}
We instantiate the expression from \Cref{thm:positional_general_res} for $k$-approval:
\begin{align*}
\frac{\sum_{i=1}^{\ell} \alpha_i}{m \alpha_{\ell+1} + \frac{\alpha_1 - \alpha_{\ell+1}}{\alpha_1} \sum_{i=1}^{\ell} \alpha_i}  = \frac{\ell}{m + 0} \text{.}
\end{align*}
\end{proof}

For the Borda rule, we get the approximation of $\frac{\ell}{m + \frac{\ell}{m - 1} \cdot \ell}$ which on the plot looks similarly to~$\frac{\ell}{m}$ (see the left-hand side plot in~\Cref{fig:borda_approx_deterministic}).

\begin{corollary}
The algorithm from \Cref{sec:_determnistic_algorithm_pos} for Borda with $\ell$-truncated elections, gives the approximation guarantee of $\frac{\ell}{m + \frac{\ell}{m - 1} \cdot \ell} $.
\end{corollary}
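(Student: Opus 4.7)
The plan is to specialize \Cref{thm:positional_general_res} to Borda's scoring function $\alpha_i = m-i$ and then pass to the simpler form stated in the corollary. First I would evaluate the ingredients appearing in the general bound: $\alpha_1 = m-1$, $\alpha_{\ell+1} = m-\ell-1$, and hence $\frac{\alpha_1 - \alpha_{\ell+1}}{\alpha_1} = \frac{\ell}{m-1}$. Writing $S = \sum_{i=1}^{\ell}(m-i)$, the theorem then delivers the approximation guarantee $\frac{S}{m(m-\ell-1) + \frac{\ell}{m-1}\cdot S}$.

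The second step is to weaken this to the stated form $\frac{\ell}{m + \frac{\ell}{m-1}\cdot \ell}$. Cross-multiplying the desired inequality $\frac{S}{m(m-\ell-1) + \frac{\ell}{m-1}S} \geq \frac{\ell}{m + \frac{\ell^2}{m-1}}$, the two identical terms $\frac{\ell^2}{m-1}\cdot S$ appearing on both sides cancel, and what remains is simply $S\,m \geq \ell\,m(m-\ell-1)$, i.e.\ $S \geq \ell\,\alpha_{\ell+1}$. This is immediate from monotonicity: since $\alpha_1,\ldots,\alpha_\ell$ is non-increasing and $\alpha_\ell \geq \alpha_{\ell+1}$, each of the $\ell$ summands in $S$ is at least $\alpha_{\ell+1}$.

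No step here poses a real obstacle; the corollary is a direct consequence of \Cref{thm:positional_general_res} after brief algebraic simplification. The only mildly non-trivial observation is that the stated bound is a slight weakening of what the theorem yields exactly for Borda, obtained by replacing $\sum_{i=1}^{\ell}\alpha_i$ with the smaller quantity $\ell\,\alpha_{\ell+1}$ inside the first term of the denominator.
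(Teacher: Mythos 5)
Correct, and essentially the same approach as the paper: both proofs instantiate \Cref{thm:positional_general_res} with the Borda scores $\alpha_i = m-i$ (so $S = \sum_{i=1}^{\ell}\alpha_i = \frac{\ell(2m-\ell-1)}{2}$ and $\alpha_{\ell+1} = m-\ell-1$) and then weaken the exact bound to the stated form. Your simplification---cross-multiply, cancel the two $\frac{\ell^2}{m-1}\cdot S$ terms, and reduce to $S \geq \ell\,\alpha_{\ell+1}$---is if anything tidier than the paper's (which divides through by $2m-\ell-1$ and uses $\frac{m-\ell-1}{2m-\ell-1} \leq \frac{1}{2}$, incidentally miswriting $\alpha_{\ell+1}$ as $m-\ell-2$ in its display); the only nit is that your closing remark states the replacement backwards: the stated bound arises from replacing $\ell\,\alpha_{\ell+1}$ in the denominator's first term by the \emph{larger} quantity $\sum_{i=1}^{\ell}\alpha_i$, not the other way around.
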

\begin{proof}
The approximation ratio follows from \Cref{thm:positional_general_res}:
\begin{align*} 
\frac{\sum_{i=1}^{\ell} \alpha_i}{m \alpha_{\ell+1} + \frac{\alpha_1 - \alpha_{\ell+1}}{\alpha_1} \sum_{i=1}^{\ell} \alpha_i} &=
\frac{\frac{(2m-\ell -1)\ell}{2} }{m (m - \ell -2) + \frac{\ell}{m - 1} \cdot \frac{(2m-\ell -1)\ell}{2}} \\
&= \frac{\frac{\ell}{2} }{m \frac{(m - \ell -2)}{(2m-\ell -1)} + \frac{\ell}{m - 1} \cdot \frac{\ell}{2}} \geq \frac{\frac{\ell}{2} }{\frac{m}{2} + \frac{\ell}{m - 1} \cdot \frac{\ell}{2}} 
= \frac{\ell}{m + \frac{\ell}{m - 1} \cdot \ell} \text{.}
\end{align*}
\end{proof}

\begin{figure}[!t]
  \centering
  \minipage{0.48\textwidth}
    \includegraphics[width=\linewidth]{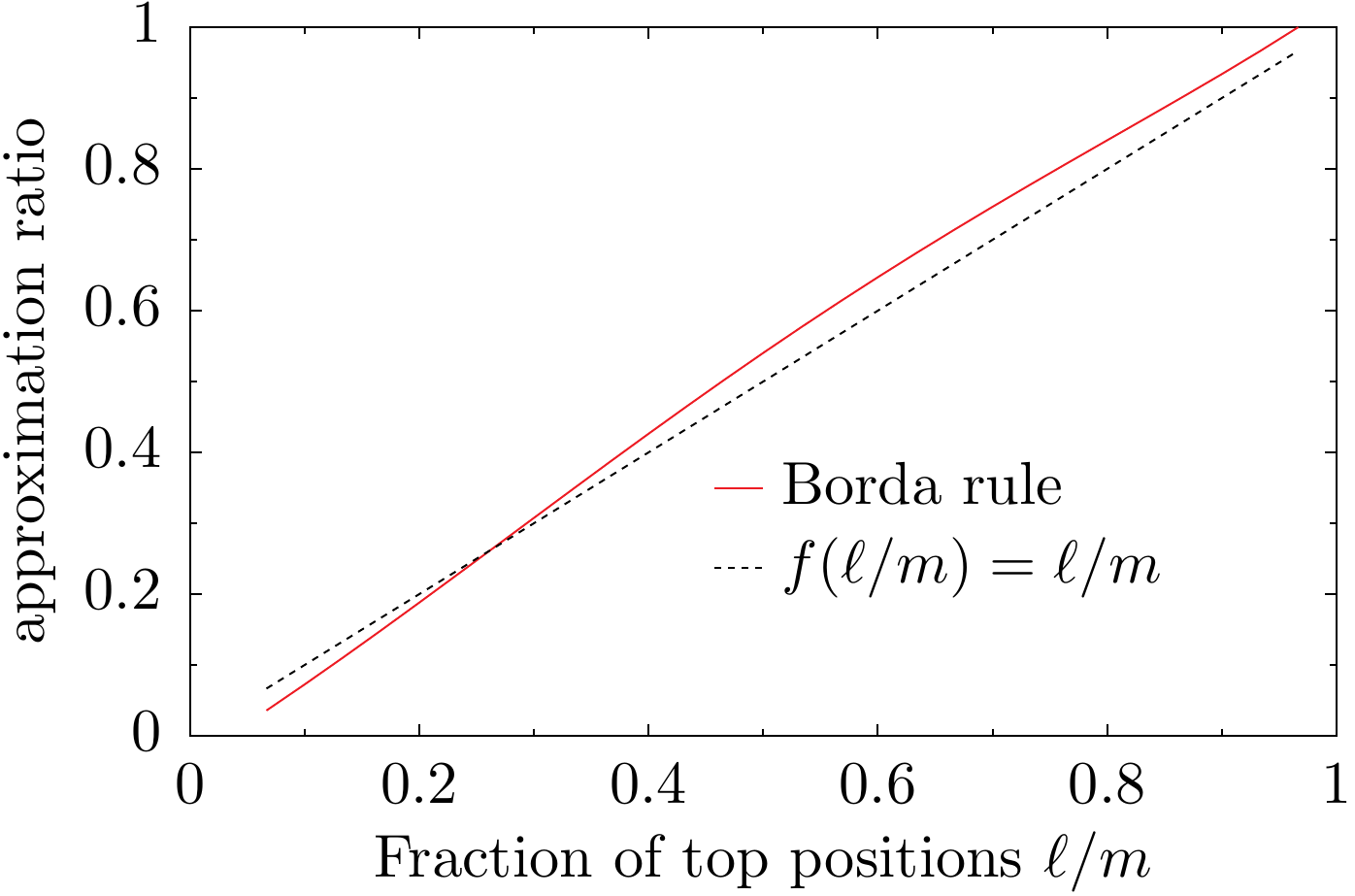}
  \endminipage\hfill
  \minipage{0.48\textwidth}
    \includegraphics[width=\linewidth]{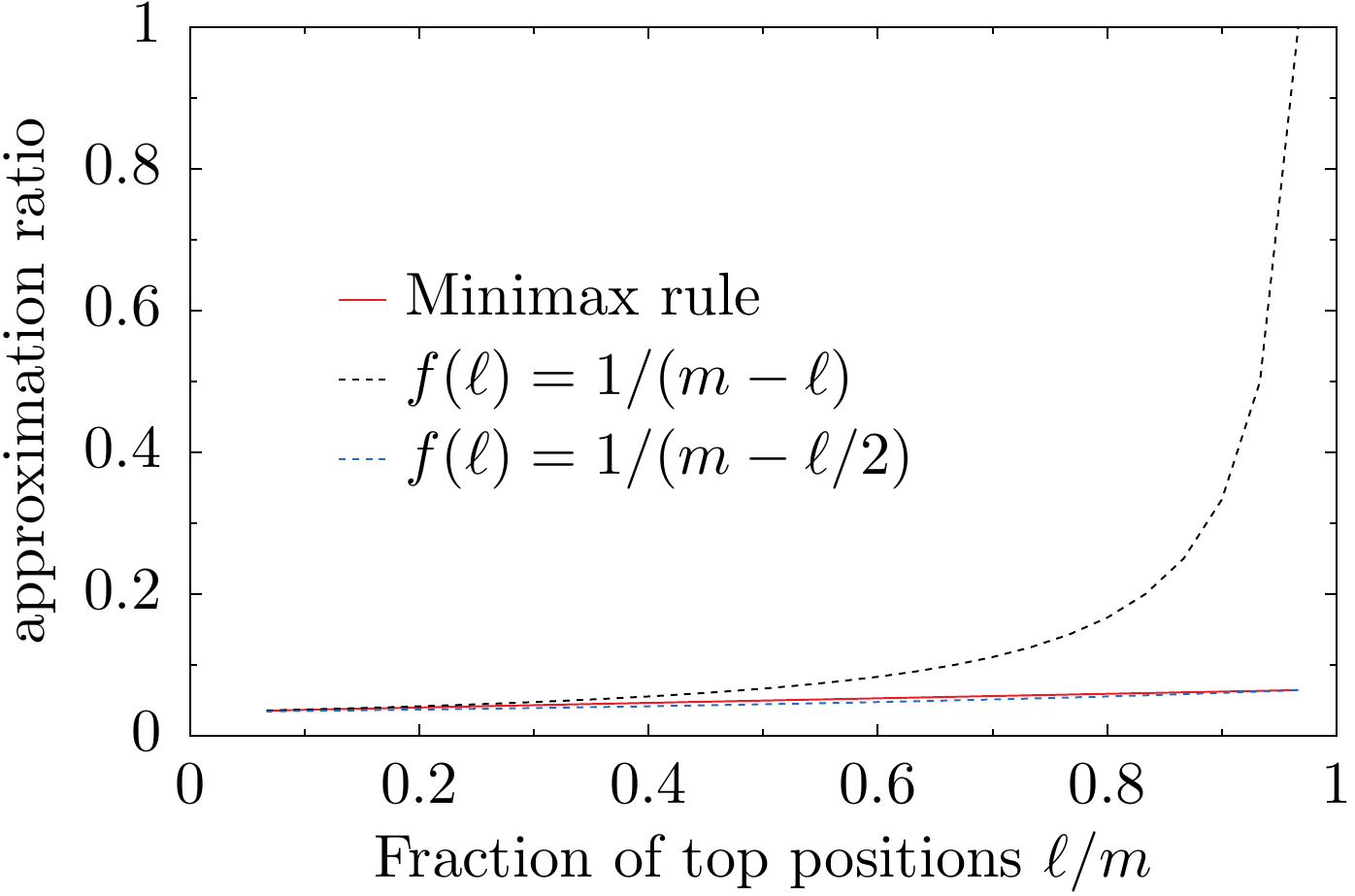}
  \endminipage\hfill
\caption{The approximation ratios for the deterministic algorithms from \Cref{sec:_determnistic_algorithm_pos} for the Borda rule (the left plot) and for the Minimax rule (the right plot). The plot was generated for $m = 30$. Note that the plotted line for the Minimax rule is almost indistinguishable from the plotted line for~$1 / (m-\ell/2)$}\label{fig:borda_approx_deterministic}
\end{figure}

We conclude by providing intuitive explanation of instances that match the bound from \Cref{thm:positional_general_res}.
In these instances all candidates get roughly the same~$\worst$ score and there are two candidates~$a,b$ that also get average~$\worst$ score but only appear as few times as possible in the first~$\ell$ positions (in the first or second positions only).
If candidate~$a$ is declared the winner by any rule then she gets~$0$ points from all voters, that did not rank her in the first two positions and~$b$ gets~$m -\ell -1$ points from these voters.
Otherwise the winning candidate gets~$0$ points from all voters that did not rank her in the first~$\ell$ positions and~$a$ gets~$m-\ell-1$ points whenever she is not ranked first or second.
Notice that no rule can distinguish between the different instances we just constructed and therefore building the instance after the rule picked a winner is permitted.
In either case, the candidate that is declared winner by any rule gets points equal to the average~$\worst$ score and the ``true winner''~$a$ or~$b$ gets~$\avg_{\worst} + (n - \frac{\avg_{\worst}}{m-1}) (m - \ell - 1)$ points.
Observe that this gives exactly the same conditions as for the computation of~$\mathrm{approx}_1$ in the proof of \Cref{thm:positional_general_res} and hence we have a matching upper bound.

\subsection{Minimax Rule}
Let us now move to the analysis of the Minimax rule.
We start by showing that no deterministic algorithm for Minimax can guarantee a better approximation ratio than~$\frac{1}{m-\ell}$.

\begin{theorem}
	\label{thm:mmlb}
	There exists no rule for $\ell$-truncated elections $\calF$ that is a $\frac{1}{m-\ell-\varepsilon}$-approximation of the Minimax rule for any~$\varepsilon > 0$.
\end{theorem}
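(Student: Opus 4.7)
The plan is to show, for any rule $\calF$ for $\ell$-truncated elections and any $\varepsilon > 0$, the existence of an election on which $\calF$'s output achieves Minimax-approximation ratio at most $\frac{1}{m-\ell}$; since $\frac{1}{m-\ell} < \frac{1}{m-\ell-\varepsilon}$ whenever $\varepsilon > 0$, this refutes the $\frac{1}{m-\ell-\varepsilon}$-approximation claim. I may assume $m \geq \ell + 2$, since otherwise $\frac{1}{m-\ell-\varepsilon} > 1$ and the statement is vacuous.

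First I would specify the instance. Label the candidates $a_0, a_1, \ldots, a_{m-1}$, take $n$ divisible by $m$, and partition the voters into $m$ equal groups $V_0, \ldots, V_{m-1}$, where every voter of $V_i$ has the truncated ballot $a_i \succ a_{i+1} \succ \cdots \succ a_{i+\ell-1}$ with indices taken modulo $m$. Write $T_i = \{a_i, \ldots, a_{i+\ell-1}\}$. The profile is cyclically symmetric, so whatever candidate $\calF$ returns I can rotate the labels to make it $a_0$. I would then complete the preferences by placing $a_1$ at position $\ell+1$ in every $V_i$ with $a_1 \notin T_i$, placing $a_0$ at position $m$ in every $V_i$ with $a_0 \notin T_i$, and filling the remaining slots arbitrarily; the two rules never collide because $a_0 \neq a_1$.

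I would then bound $\sc_{\minimax}(a_0)$ from above via the opponent $a_{m-1}$. The candidates $a_0$ and $a_{m-1}$ co-appear in the top-$\ell$ sets of exactly $\ell-1$ groups, and in each of them $a_{m-1}$ is listed immediately before $a_0$; in the single group $V_{m-\ell}$ with $a_{m-1} \in T_i$ but $a_0 \notin T_i$ the ballot forces $a_{m-1} \succ a_0$; and in each group containing neither of them my completion places $a_0$ last, so $a_{m-1} \succ a_0$ again. Only in $V_0$ does $a_0$ beat $a_{m-1}$, giving $\sc_{\minimax}(a_0) \leq \sc_{\minimax}(a_0, a_{m-1}) = n/m$.

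Finally I would lower bound $\sc_{\minimax}(a_1)$ by showing that $\sc_{\minimax}(a_1, a_j) \geq (m-\ell)\,n/m$ for every $a_j \neq a_1$. Letting $C$ denote the number of groups with both $a_1, a_j \in T_i$ and $D \leq C$ the number of those in which $a_1$ appears below $a_j$, a short accounting argument shows that $a_1 \succ a_j$ holds in $\ell - D$ of the $\ell$ groups containing $a_1$ in their top-$\ell$ set (the $\ell - C$ with $a_j \notin T_i$ automatically, plus the $C - D$ with $a_1$ above $a_j$), and also in all $m - 2\ell + C$ groups containing neither $a_1$ nor $a_j$ (because the completion puts $a_1$ at position $\ell + 1$, above every candidate outside $T_i$). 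Summing yields $\sc_{\minimax}(a_1, a_j) = (m - \ell + C - D)\,n/m \geq (m-\ell)\,n/m$, hence $\sc_{\minimax}(a_1) \geq (m-\ell)\,n/m$ and the desired ratio bound follows. The delicate step is arranging a single completion that simultaneously pushes $a_0$ as low and $a_1$ as high as possible while remaining consistent with the truncated ballots; the cyclic structure is what makes this feasible and at the same time makes the argument invariant under whichever candidate $\calF$ happens to pick.
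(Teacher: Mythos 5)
Your proof is correct and follows essentially the same route as the paper: the same cyclically symmetric $\ell$-truncated profile, the same symmetry argument to fix the returned candidate, and an adversarial completion that drives the chosen candidate's Minimax score down to $n/m$ while a cyclic neighbour retains score $(m-\ell)\,n/m$. The only differences are cosmetic---you build the voter-replication into the construction (the paper handles it in a remark), and your completion pins $a_0$ to position $m$ and $a_1$ to position $\ell+1$ with the rest arbitrary, whereas the paper appends all unranked candidates in a single fixed subscript order, which yields the same two bounds.
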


\begin{proof}
	Consider the following~$\ell$-truncated instance of election:
	Let~$C = \{c_0,c_1,\ldots,c_{m-1}\}$ be the set of candidates and let~$V = \{v_0,v_1,\ldots v_{m-1}\}$ be the set of voters.
	Let the truncated preference list of voter~$v_i$~be 
        \begin{align*}
        c_i \succ c_{i+1\m m} \succ c_{i+2\m m} \succ \dots \succ c_{i+\ell-1\m m} \text{.}
        \end{align*}
	Due to symmetry, any candidate can be declared winner by each algorithm.
	For the sake of simplicity, let us assume that candidate~$c_2$ is declared winner.
	We can then complement this instance by inserting all candidates that were not ranked by some voter~$v_i$ in the order suggested by the subscript, that is,
	\begin{align*}
		c_1 \succ c_2 \succ \dots \succ c_m.
    \end{align*}
	On the one hand, only voter~$v_2$ prefers~$c_2$ over~$c_1$ and all other~$m-1$ voters prefer~$c_1$ over~$c_2$. Thus, the Minimax score of $c_2$ is 1. 
	On the other hand, the Minimax score of $c_1$ is $m-\ell$. Indeed, the strongest contender to~$c_1$ is~$c_m$ and~$\ell$ voters prefer~$c_m$ over~$c_1$ (for~$\ell < m$) while~$m-\ell$ voters prefer~$c_1$ over~$c_m$.
	Thus, no (deterministic) algorithm can achieve a better approximation ration than~$ \nicefrac{\sc_{\minimax}(c_2)}{\sc_{\minimax}(c_1)} = \nicefrac{1}{m-\ell}$.
\end{proof}

Note that in the construction in \Cref{thm:mmlb} one can increase the number of voters to be much larger than the number of candidates by simply copying all voters a sufficient number of times. 

\Cref{thm:mmlb} already shows that with $\ell$-trucnated ballots Minimax cannot be well approximated. In particular, the bound for the Minimax rule is much worse than for scoring-based rules. 
We do not know whether the bound from \Cref{thm:mmlb} is tight. Yet, we can show that a simplified variant of the algorithm from \Cref{sec:_determnistic_algorithm_pos} that computes the maximum~$\worst$ score of each candidate and declares the one with the highest score winner, achieves an approximation ratio of~$\frac{1}{(m-\ell) \cdot \left(1 + \frac{\ell^2}{m^2 - \ell^2 - m + \ell}\right)}$.
This approximation ratio is lower-bounded by~$\frac{1}{m - \nicefrac{\ell}{2}}$, which means that for reasonably small $\ell$ it (almost) matches the upper bound from~\cref{thm:mmlb} (see the right-hand side plot in \Cref{fig:borda_approx_deterministic} for the comparison of these two bounds).

\begin{theorem}
	\label{thm:mmub}
	The algorithm from \Cref{sec:_determnistic_algorithm_pos} approximates the Minimax rule in $\ell$-truncated elections within a factor of
\begin{align*}
\frac{1}{(m-\ell) \cdot \left(1 + \frac{\ell^2}{m^2 - \ell^2 - m + \ell}\right)} \geq \frac{1}{m-\nicefrac{\ell}{2}} \text{.}
\end{align*}
\end{theorem}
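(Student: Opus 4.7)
My plan is to compare the algorithm's chosen candidate $c^*$, which maximizes $\worst$, with the true Minimax winner $c^{\text{opt}}$. Setting $W := \worst(c^*)$, the trivial lower bound $\sc_{\minimax}(c^*) \geq W$ is immediate because $W$ is by definition attained in the worst-case completion, and thus lower-bounds the true score in any completion. The algorithm's choice also ensures $W \geq \worst(c)$ for every candidate $c$, a fact I will reuse.

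For the upper bound on $\sc_{\minimax}(c^{\text{opt}})$, the direct estimate $\sc_{\minimax}(c^{\text{opt}}) \leq \sc_{\minimax}(c^{\text{opt}}, c^*) \leq n - W$, which would only give a ratio $W/(n-W)$, is too weak. Instead I average over all opponents: $\sc_{\minimax}(c^{\text{opt}}) \leq \frac{1}{m-1}\sum_{c' \neq c^{\text{opt}}}\sc_{\minimax}(c^{\text{opt}},c') = \frac{1}{m-1}\sum_v \bigl(m - \pos_v(c^{\text{opt}})\bigr)$. Splitting voters by whether $c^{\text{opt}}$ lies in their top $\ell$ or not, and using that in the latter case $\pos_v(c^{\text{opt}}) \geq \ell+1$ (so the contribution is at most $m-\ell-1$), I obtain, writing $a := a_{c^{\text{opt}}}$, the estimate $\sc_{\minimax}(c^{\text{opt}}) \leq \frac{n(m-\ell-1) + a\ell}{m-1}$.

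Next I lower-bound $W$ using that $c^*$ maximizes $\worst$. The elementary inequality $\worst(c) \geq f_c$, where $f_c$ is the number of voters ranking $c$ first (any such voter definitely prefers $c$ to every other candidate), combined with $\sum_c f_c = n$, yields $\sum_c \worst(c) \geq n$ and hence $W \geq n/m$ by averaging. When $a$ is close to $n$ this estimate needs reinforcement; here I apply the identity $\worst(c^{\text{opt}}, c_0) + \worst(c_0, c^{\text{opt}}) = a + a_{c_0} - b(c^{\text{opt}}, c_0)$ for the $c_0$ achieving $\worst(c^{\text{opt}}) \leq W$, which gives $\max_{c'} \worst(c', c^{\text{opt}}) \geq a - W$ and therefore the complementary bound $\sc_{\minimax}(c^{\text{opt}}) \leq n - a + W$; this replaces the averaging estimate in the large-$a$ regime.

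Plugging everything into $\sc_{\minimax}(c^*)/\sc_{\minimax}(c^{\text{opt}}) \geq W(m-1)/[n(m-\ell-1) + a\ell]$ and manipulating with the factorization $m^2 - \ell^2 - m + \ell = (m-\ell)(m+\ell-1)$, I expect to arrive at the target ratio $\frac{1}{(m-\ell)(1 + \ell^2/((m-\ell)(m+\ell-1)))}$. The secondary inequality $\frac{1}{(m-\ell)(1+\frac{\ell^2}{m^2-\ell^2-m+\ell})} \geq \frac{1}{m-\ell/2}$ then follows from routine algebra (essentially, $\ell^2/(m+\ell-1) \leq \ell/2$ for $\ell \leq m$). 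The main obstacle will be uniformly reconciling the two regimes for $a$: the Plurality-type bound $W \geq n/m$ suffices for small $a$, while the identity above handles large $a$, and I expect the cleanest write-up is a case distinction on $a$ versus a threshold around $n\ell/m$, with the final algebraic step verifying that the worst case among these regimes still meets the claimed ratio.
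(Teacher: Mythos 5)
Your proposal is correct and is essentially the paper's own argument in direct form: your averaging bound $\sc_{\minimax}(c^{\mathrm{opt}}) \le n - (n-a)\ell/(m-1)$ is the paper's pigeonhole step over the $n-\occ(f)$ voters not ranking the candidate, your identity-based bound $\sc_{\minimax}(c^{\mathrm{opt}}) \le n - a + W$ is the paper's relation linking $\occ(f)$, $\best(f)$, and $\worst(f)$, and $W \ge n/m$ is the same plurality pigeonhole---the paper merely packages the combination as a contradiction (``if all $\worst$ scores are at most $x$, then all $\best$ scores are at most $x(m-\ell)\bigl(1+\frac{\ell^2}{m^2-\ell^2-m+\ell}\bigr)$'') instead of your min-of-two-bounds in $a$. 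Your ``expected'' algebra does close: the two bounds cross at $a^{\star} = \frac{n\ell+(m-1)W}{m+\ell-1}$, giving $\sc_{\minimax}(c^{\mathrm{opt}}) \le \frac{n(m-1)+W\ell}{m+\ell-1} \le W\cdot\frac{m^2-m+\ell}{m+\ell-1} = W(m-\ell)\bigl(1+\frac{\ell^2}{m^2-\ell^2-m+\ell}\bigr)$, and the secondary inequality requires $\ell \le m-1$ (the paper's $m>\ell$), not $\ell \le m$ as you wrote.
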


\begin{proof}
	We will prove our claim by showing that if all candidates have a~$\worst$ score of at most~$x$, then all candidates have a maximum~$\best$ score of at most~$x \cdot (m-\ell) \cdot \left(1 + \frac{\ell^2}{m^2 - \ell^2 - m + \ell}\right)$.

	Assume towards a contradiction that all candidates have a~$\worst$ score of at most~$x$ and there exists a candidate~$f$ with~$\best(f) > x \cdot (m-\ell) \cdot \left(1 + \frac{\ell^2}{m^2 - \ell^2 - m + \ell}\right)$.
	We say that $c \succ_v^t d$ for voter~$v$ and candidates~$c$ and~$d$ if and only if~$c$ is preferred over~$d$ by voter~$v$ in the~$\ell$-truncated instance, that is, either~$c$ and~$d$ are both among the~$\ell$ most preferred candidates of~$v$ and~$c$ is preferred over~$d$ or only~$c$ is among the first~$\ell$ candidates.
	Since $$\best(f) = \min_{c' \in C \setminus \{f\}} \{ n - |\{v \in V \mid c' \succ_v^t f\}| \}$$ and~$\best(f) > x \cdot (m-\ell) \cdot \left(1 + \frac{\ell^2}{m^2 - \ell^2 - m + \ell}\right)$, it follows that for all~$c \in C \setminus \{f\}$ we have
\begin{align*}
 n - |\{v \in V \mid c \succ_v^t f\}| > x \cdot (m-\ell) \cdot \left(1 + \frac{\ell^2}{m^2 - \ell^2 - m + \ell}\right).
\end{align*}
	Hence for all~$c \neq f$ it holds that 
\begin{equation}
|\{v \in V \mid c \succ_v^t f\}| < n - x \cdot (m-\ell) \cdot \left(1 + \frac{\ell^2}{m^2 - \ell^2 - m + \ell}\right).\label{eq:pref_set_bound}
\end{equation}
	We now analyze~$\worst(f)$.
	Let~$\occ(d)$ be the number of times a candidate~$d$ \emph{occurs} in the truncated instance, that is, the number of voters that rank candidate~$d$ among the first~$\ell$ positions.
	First, observe that for each candidate~$c$ it holds that $$\worst(c) \geq \min_{c' \in C\setminus \{c\}} \{\occ(c) - |\{v \in V \mid c' \succ_v^t c\}|\}$$ as~$c$ is preferred over any~$c'$ at least that often.
	Hence,
\begin{align*}
\worst(f) &\geq \min_{c \in C \setminus\{f\}} \{\occ(f) - |\{v \in V \mid c \succ_v^t f\}|\} \\
          &\overset{\cref{eq:pref_set_bound}}> \occ(f) - (n - x \cdot (m-\ell) \cdot \left(1 + \frac{\ell^2}{m^2 - \ell^2 - m + \ell}\right).
\end{align*}
	Since, by assumption,~$\worst(f) \leq x$, it holds that~
\begin{align*}
x > \occ(f) - n + x \cdot (m-\ell) \cdot \left(1 + \frac{\ell^2}{m^2 - \ell^2 - m + \ell}\right),
\end{align*}
or, equivalently,
\begin{equation}	
\occ(f) < n - x \cdot (m-\ell) \cdot \left(1 + \frac{\ell^2}{m^2 - \ell^2 - m + \ell}\right)+x. \label{eq:occ_bound}
\end{equation}
	Second, notice that if~$f$ is ranked among the top~$\ell$ candidates by at most~$\occ(f)$ voters, then there are~$n - \occ(f)$ voters that do not rank~$f$ among the first~$\ell$ positions and by pigeonhole principle there is a candidate~$c \in C \setminus \{f\}$ with~$|\{v \mid c \succ_v^t f\}| \geq (n-\occ(f)) \nicefrac{\ell}{m-1}$.
	As discussed above, from \Cref{eq:pref_set_bound} it follows that
\begin{align*}
(n-\occ(f)) \nicefrac{\ell}{m-1} < n - x \cdot (m-\ell) \cdot \left(1 + \frac{\ell^2}{m^2 - \ell^2 - m + \ell}\right),
\end{align*}
 which is equivalent to
\begin{align*}
	\occ(f) > x \cdot \frac{(m-1)\cdot (m-\ell) \cdot \left(1 + \frac{\ell^2}{m^2 - \ell^2 - m + \ell}\right)}{\ell} - \frac{n(m-1)}{\ell} + n.
\end{align*}
	Plugging in~\Cref{eq:occ_bound} into this inequality, we get that
	\begin{align*}
	&x \cdot \frac{(m-1) (m-\ell) \left(1 + \frac{\ell^2}{m^2 - \ell^2 - m + \ell}\right)}{\ell} - \frac{n(m-1)}{\ell} +n \\
        & < n - x \cdot (m-\ell) \cdot \left(1 + \frac{\ell^2}{m^2 - \ell^2 - m + \ell}\right)+x\\
	\iff \quad &x \cdot \left[ \frac{(m + \ell -1)(m-\ell)\left(1 + \frac{\ell^2}{m^2 - \ell^2 - m + \ell}\right) - \ell}{\ell}\right ] < n \cdot \frac{m-1}{\ell}\\
	\iff \quad &x < \frac{n \cdot (m-1)}{(m + \ell -1)(m-\ell)\left(1 + \frac{\ell^2}{m^2 - \ell^2 - m + \ell}\right) - \ell} \\
                   &  = \frac{n (m-1)}{(m+\ell-1)(m-\ell)+ \ell^2 - \ell}\\
	\quad & = \frac{n(m - 1)}{m^2 - m} = \frac{n}{m} \text{.}
	\end{align*}

	Notice that on the other hand~$x \geq \nicefrac{n}{m}$ as by pigeonhole principle there is a candidate that is ranked first at least~$\nicefrac{n}{m}$ times and hence has a $\worst$ score of at least~$\nicefrac{n}{m}$.
	Thus, we have reached a contradiction, completing the first part of the proof. We finish the proof by proving $$(m-\ell) \cdot \left(1 + \frac{\ell^2}{m^2 - \ell^2 - m + \ell}\right) \leq m-\nicefrac{\ell}{2}.$$ Observe that
\begin{align*}
(m-\ell) \cdot \left(1 + \frac{\ell^2}{m^2 - \ell^2 - m + \ell}\right) &= m - \ell + \frac{\ell^2 (m-\ell)}{(m-\ell)(m+\ell-1)}\\
&= m-\ell+\frac{\ell^2}{m+\ell-1}\\
&\overset{m > \ell}\leq  m-\ell+\frac{\ell^2}{2\ell}\\
&=  m-\nicefrac{\ell}{2} \text{.}\qedhere
\end{align*}
\end{proof}

\section{Experimental Evaluation}\label{sec:experiments}

In \Cref{sec:randomized,sec:deterministic} we have assessed the worst-case guarantees of our approximation algorithms. In this section we investigate how these guarantees depend on particular distributions of the the voters' preferences.
We tested the following distributions over preference rankings:
\begin{description}
\item[Impartial Culture (IC).] Under the Impartial Culture model each ranking over the candidates is equally probable.
\item[One-dimensional Euclidean Model (1D).] First, we associate each voter and each candidate with a point from the interval $[0, 1]$---these points are sampled independently and uniformly at random. Then, each voter ranks the candidates according to her distance, preferring the ones which are closer to those which are farther.
\item[Mixture of Mallows' Models (MMM).] In the Mallows' model~\cite{mal:j:mallows} we are given a reference ranking $\pi$ and a real value $\phi \in [0,1]$; the probability of sampling a ranking $\tau$ is proportional to $\phi^{d_K(\pi, \tau)}$, where $d_K(\pi, \tau)$ is the number of swaps of adjacent candidates that are required to turn $\phi$ into $\tau$. We used a mixture of three Mallows' models: for each of the three models we drawn the reference ranking $\pi$ and the real value $\phi$ uniformly at random. Next, we sampled the parameters $\lambda_1, \lambda_2, \lambda_3$ that sum up to one; to generate a ranking we first pick one of the three models, we pick the $i$-th model with probability $\lambda_i$, and we generate the ranking according to the Mallows' model we picked.
\item[Single Peaked Impartial Culture (SPIC).] In order to generate a profile we first randomly select a reference ranking. Then, we generate rankings that are single-peaked with respect to the reference ranking. Each such single peaked ranking is equally probable. For a definition and discussion on single-peaked preferences we refer the reader to the book chapter by Elkind et al.~\cite{ELP-trends}. 
\end{description}

For each distributions~$\calD$ over preferences and for each approximation algorithm $\calA$ we ran computer simulations as follows: We set the number of candidates to $m=50$ and tested for $\ell \in \{2,5,8\}$. We ran simulations for the number of voters $n$ ranging from $10$ to $1000$ in steps of $25$. For each combination of values of $(\ell, n)$ we ran 500 independent experiments, each time computing the ratio~$r(\calA, \calD)$ between the score of the candidate returned by algorithm $\calA$ to the score of the optimal candidate. The averages of these ratios (averaged over the aforementioned 500 simulations) and the corresponding standard deviations for the Borda and the Minimax rules are depicted in \Cref{fig:approx_exper_borda} and \Cref{fig:approx_exper_minimax}, respectively. 

\begin{figure}[!thb]
\begin{center}
Impartial Culture

\minipage{0.42\textwidth}
  \includegraphics[width=\linewidth]{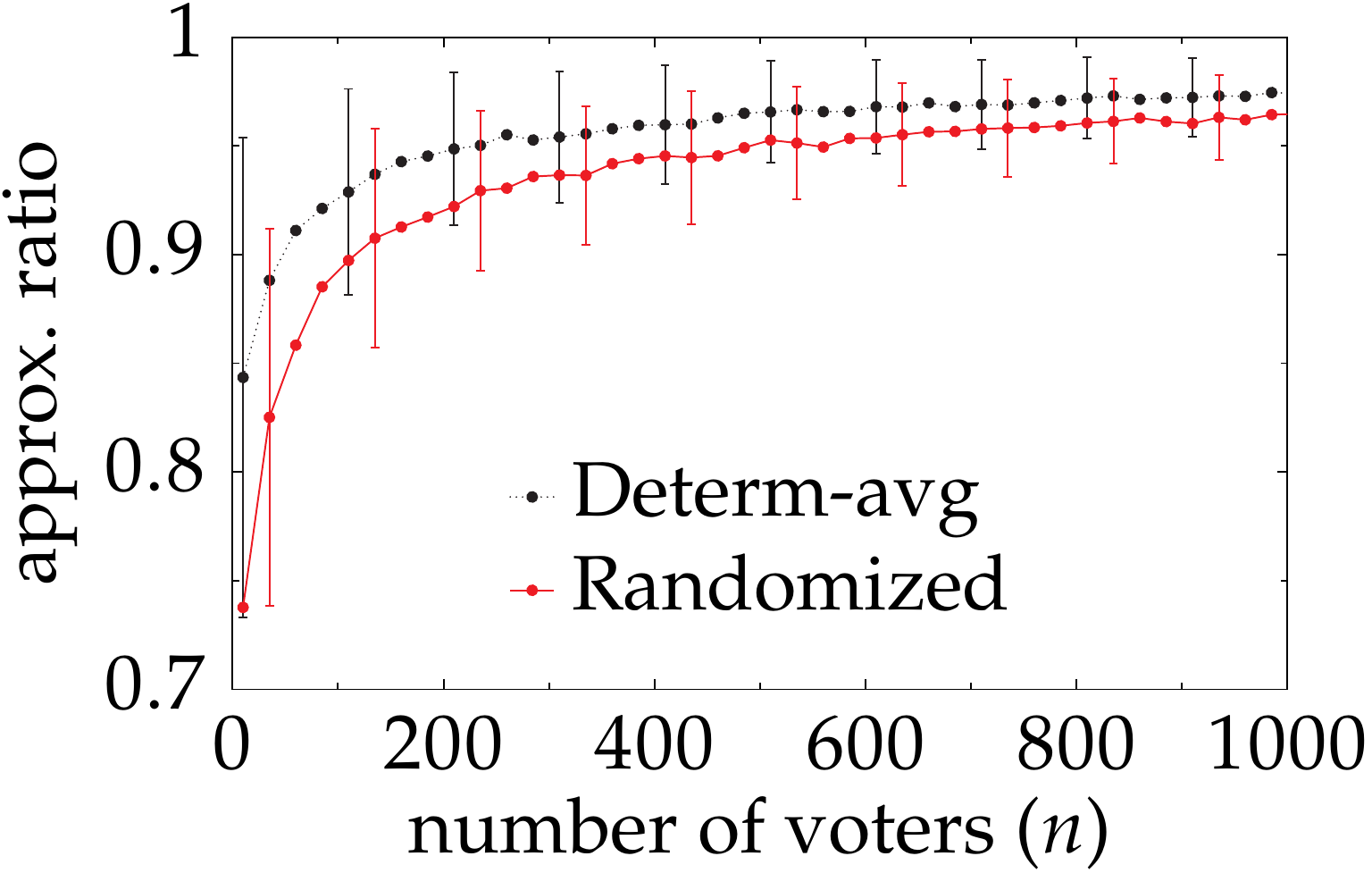}
  (a) $\ell = 2$
\endminipage\hfill
\minipage{0.42\textwidth}
  \includegraphics[width=\linewidth]{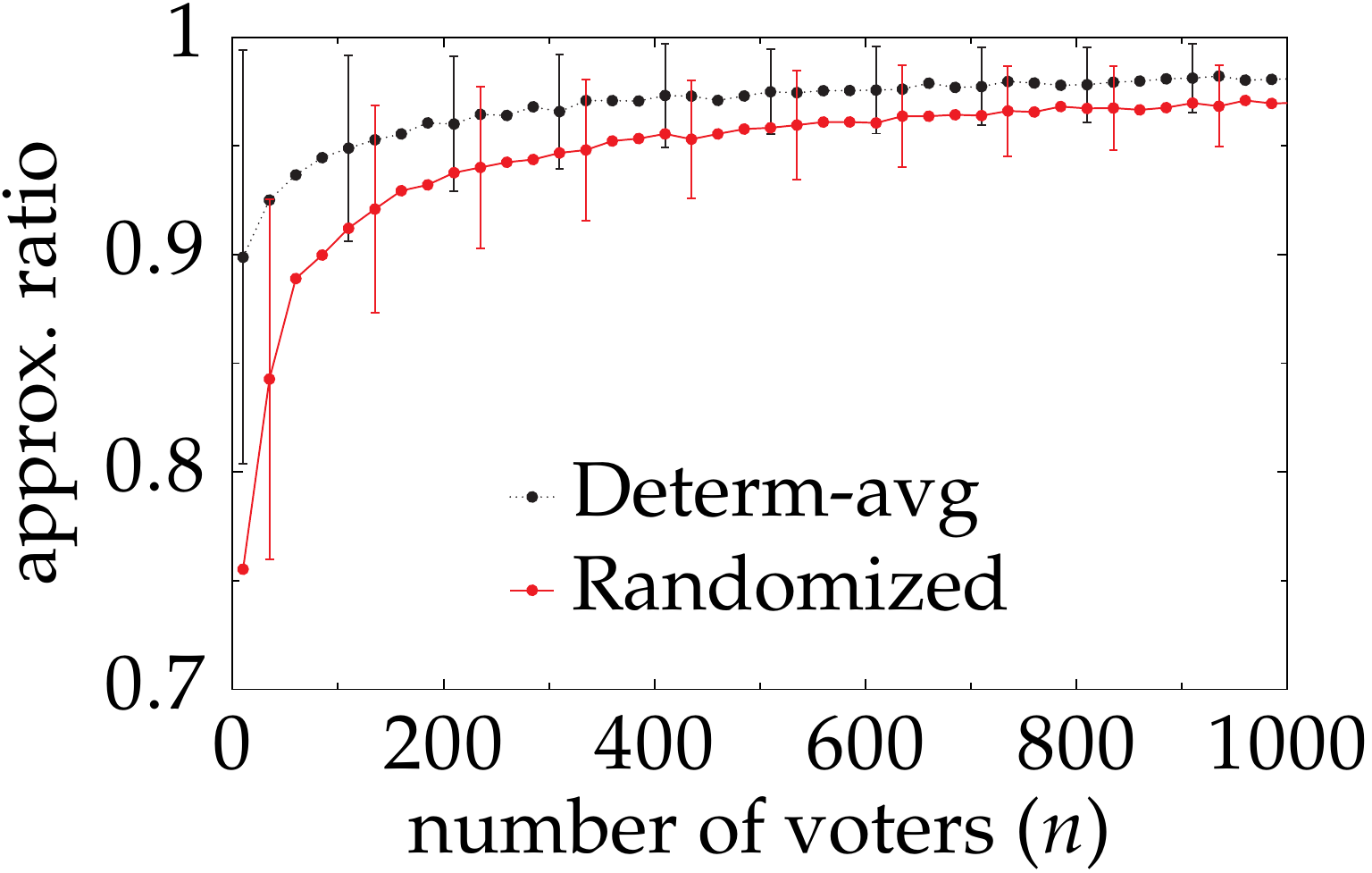}
  (b) $\ell = 5$
\endminipage

One-dimensional Euclidean Model

\minipage{0.42\textwidth}
  \includegraphics[width=\linewidth]{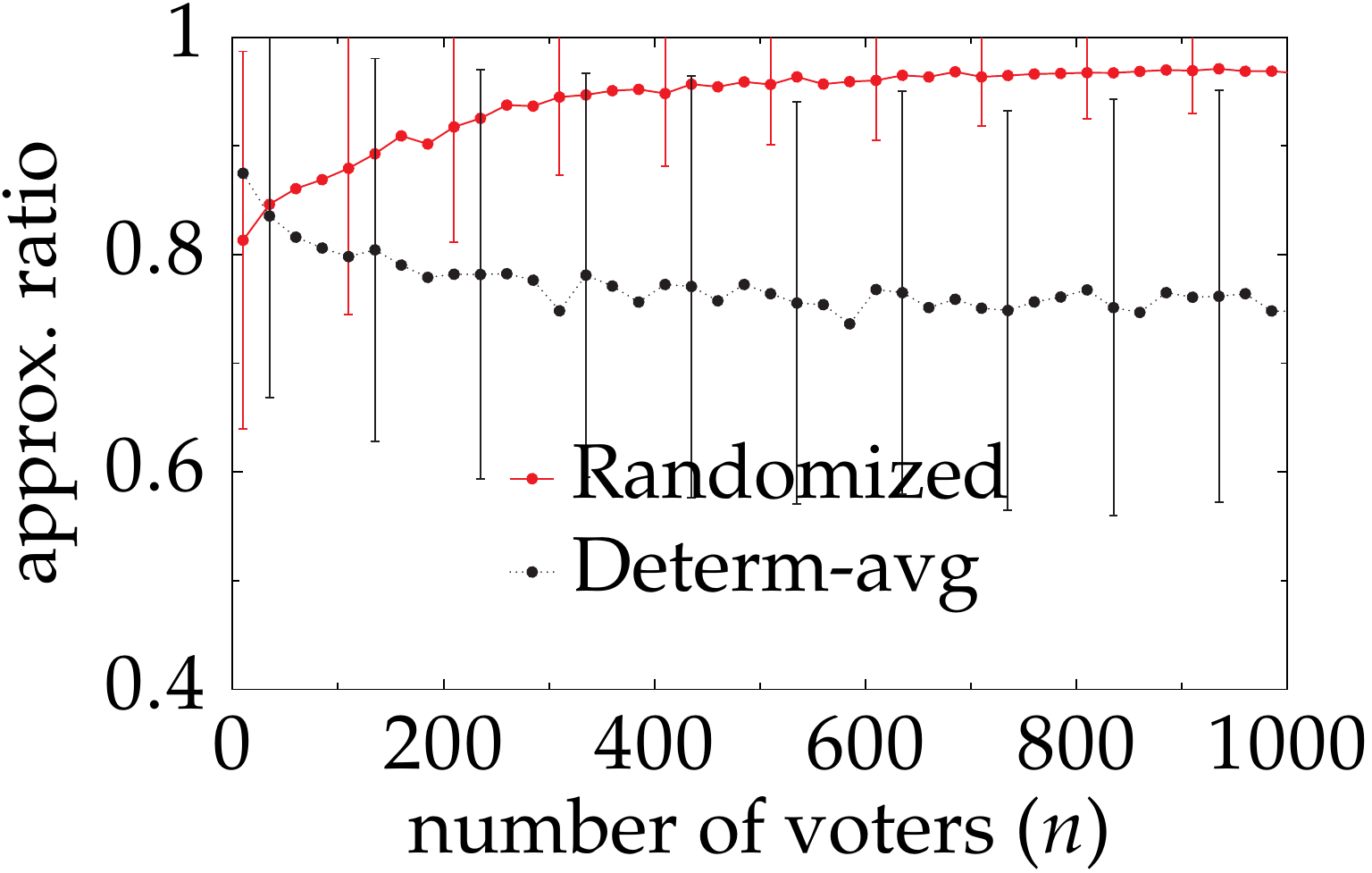}
  (a) $\ell = 2$
\endminipage\hfill
\minipage{0.42\textwidth}
  \includegraphics[width=\linewidth]{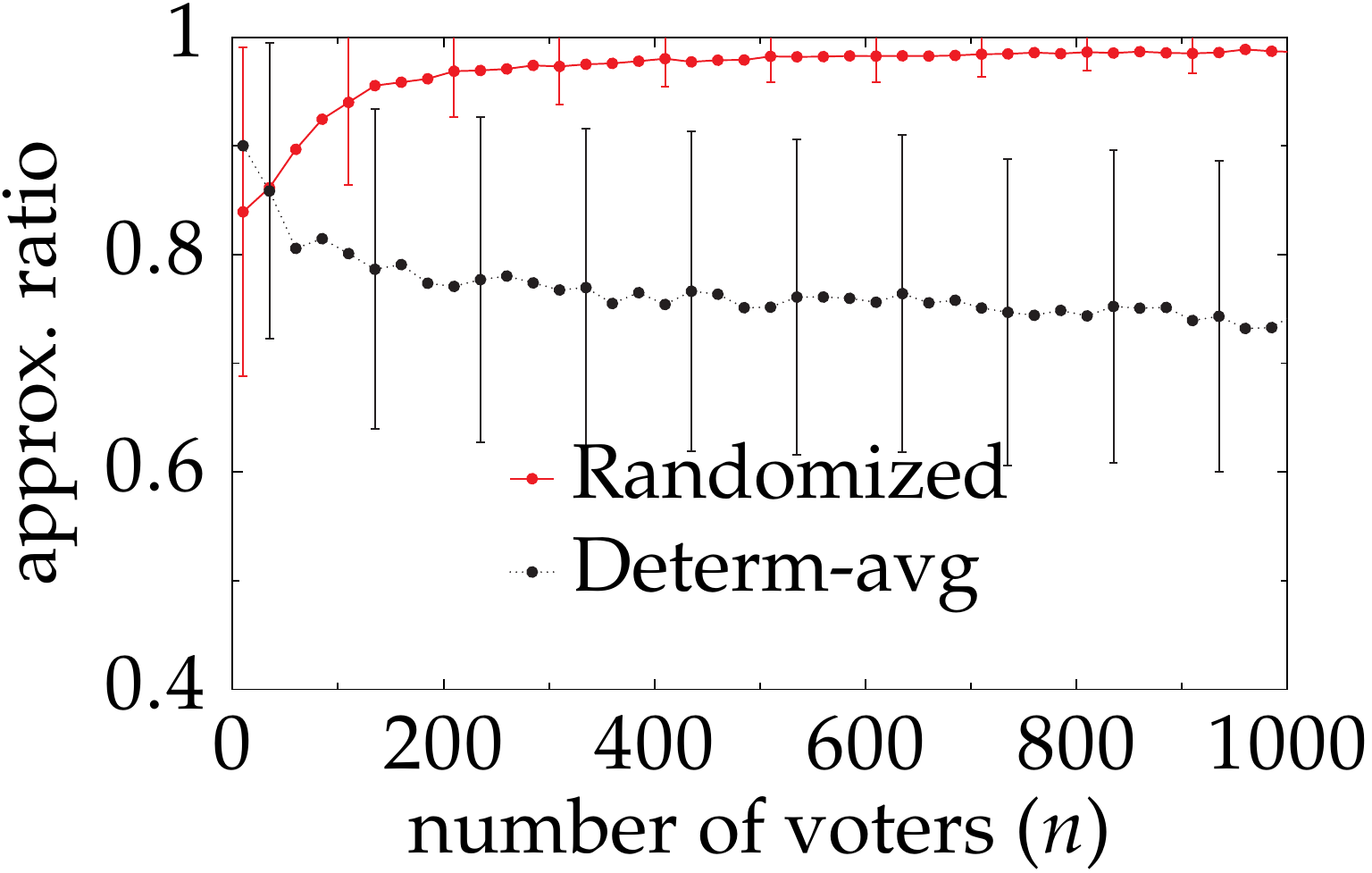}
  (b) $\ell = 5$
\endminipage

Mixture of Mallows' Models

\minipage{0.42\textwidth}
  \includegraphics[width=\linewidth]{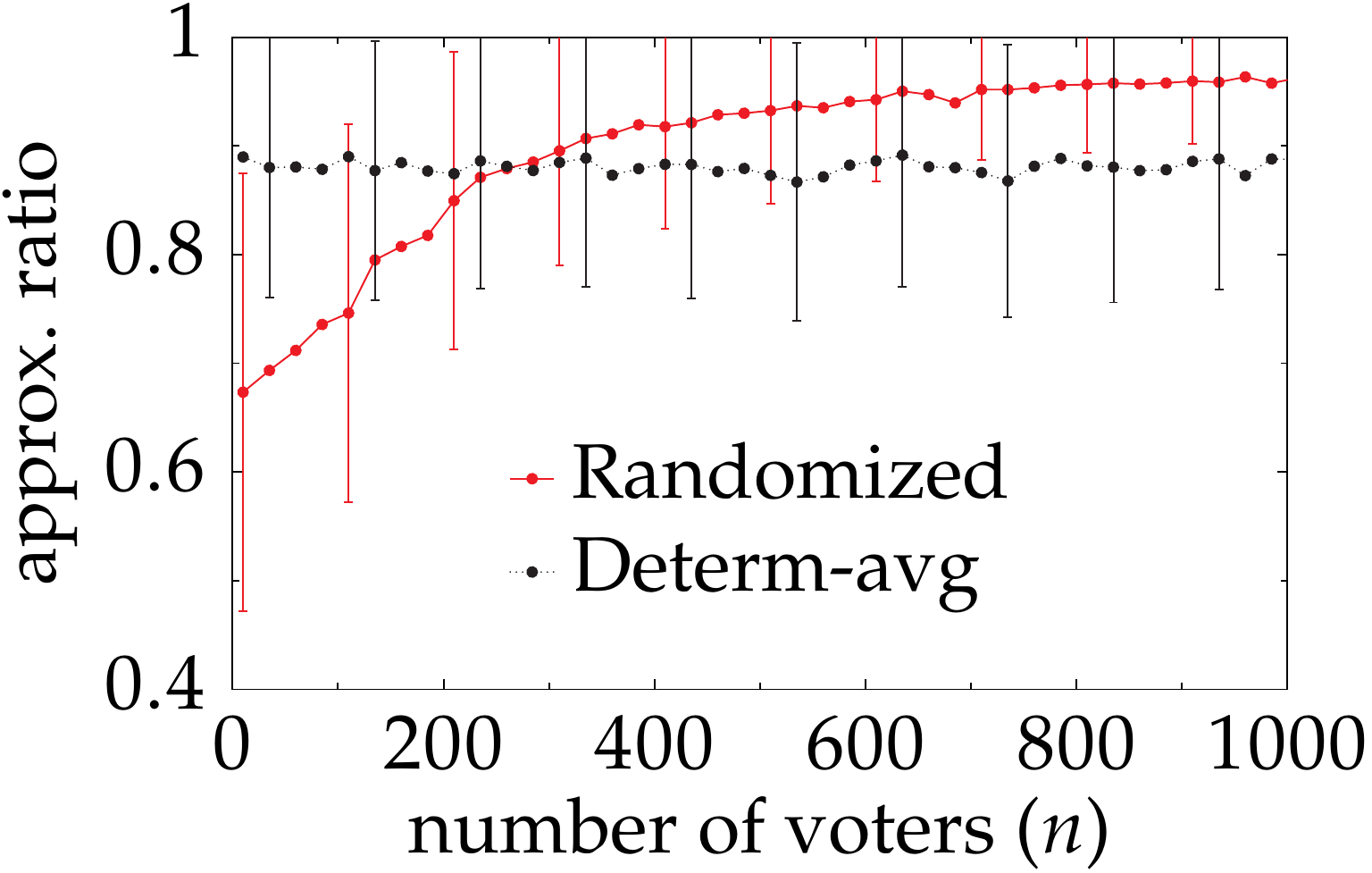}
  (a) $\ell = 2$
\endminipage\hfill
\minipage{0.42\textwidth}
  \includegraphics[width=\linewidth]{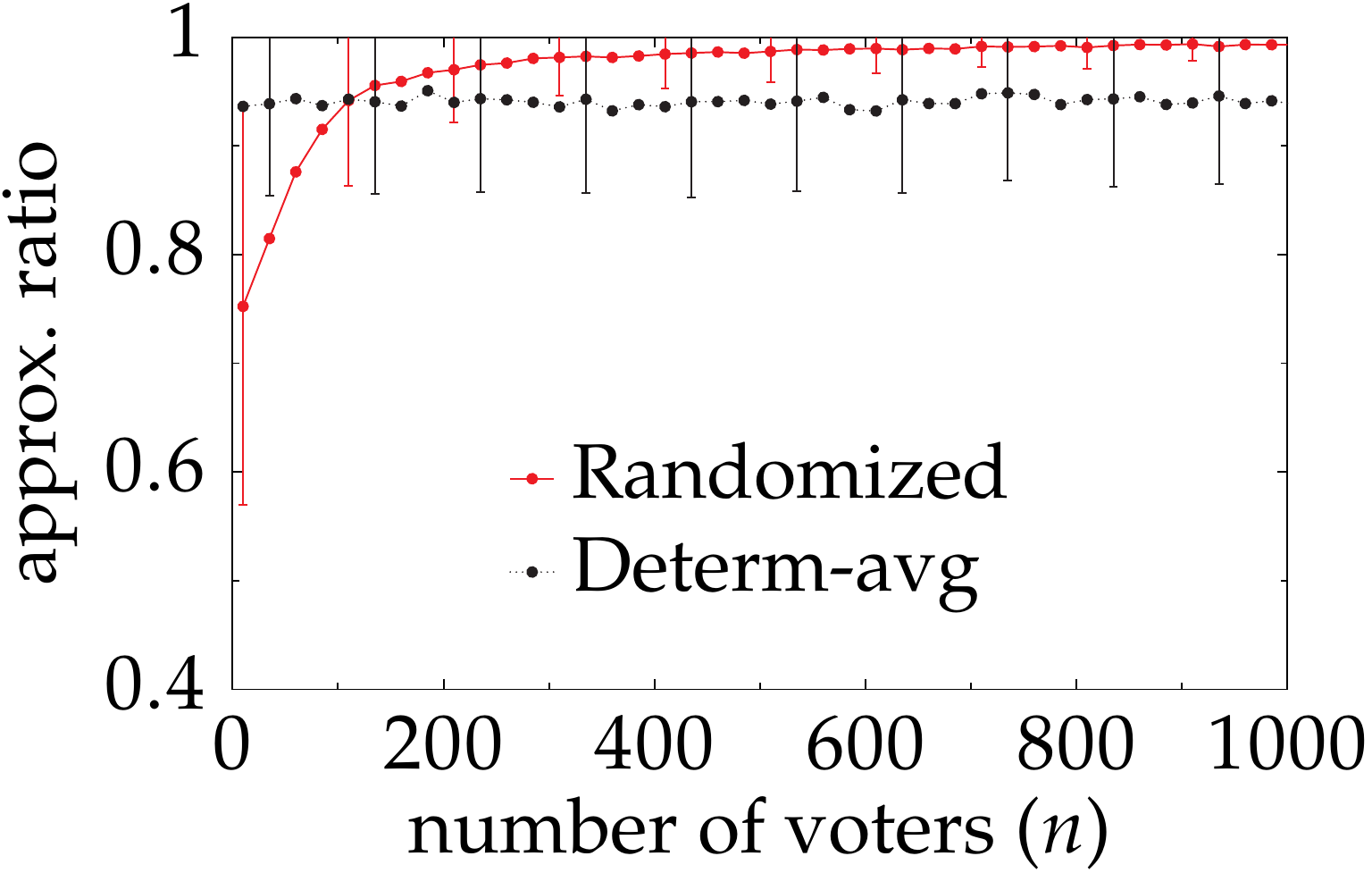}
  (b) $\ell = 5$
\endminipage

Single Peaked Impartial Culture

\minipage{0.42\textwidth}
  \includegraphics[width=\linewidth]{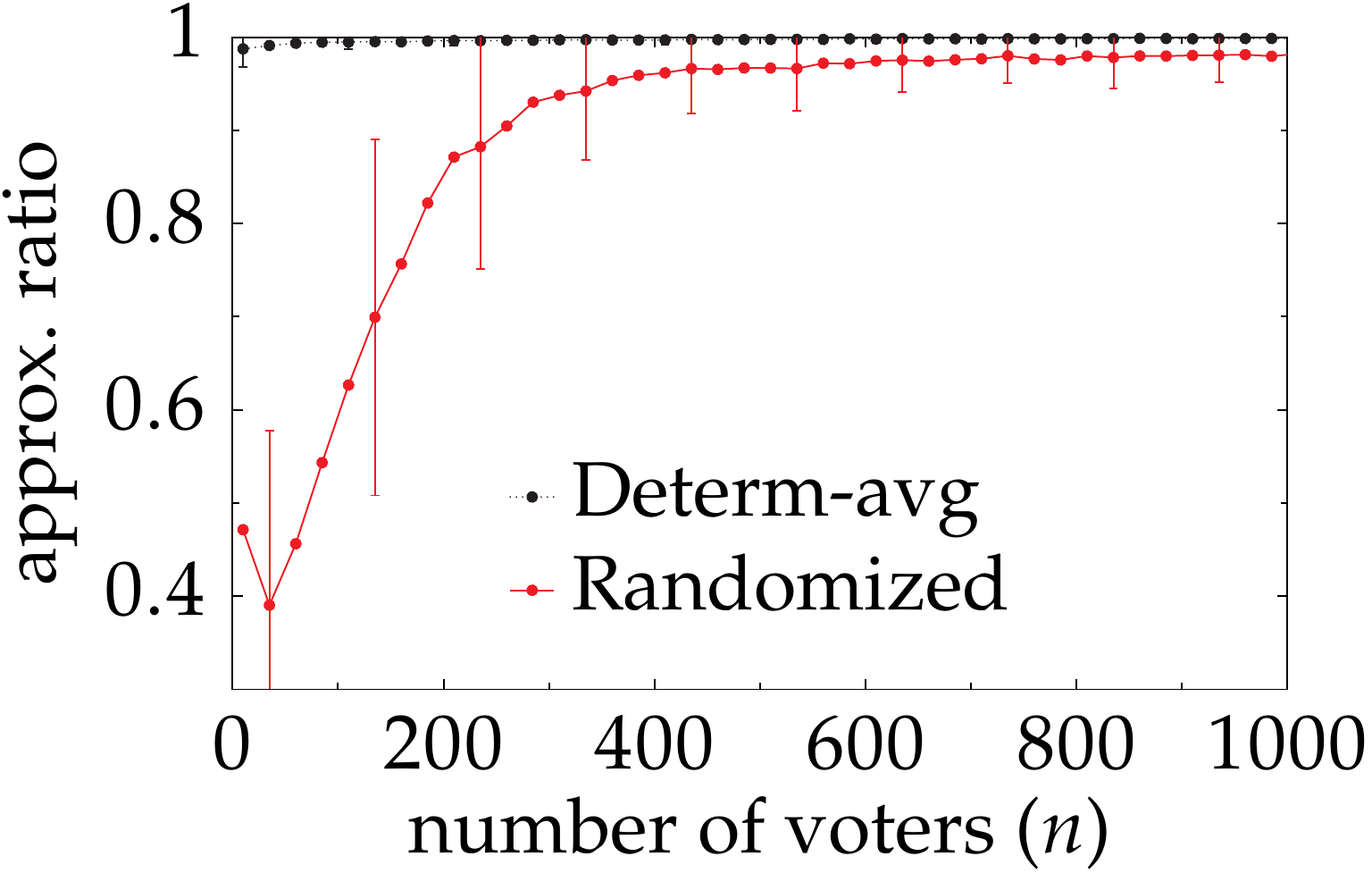}
  (a) $\ell = 2$
\endminipage\hfill
\minipage{0.42\textwidth}
  \includegraphics[width=\linewidth]{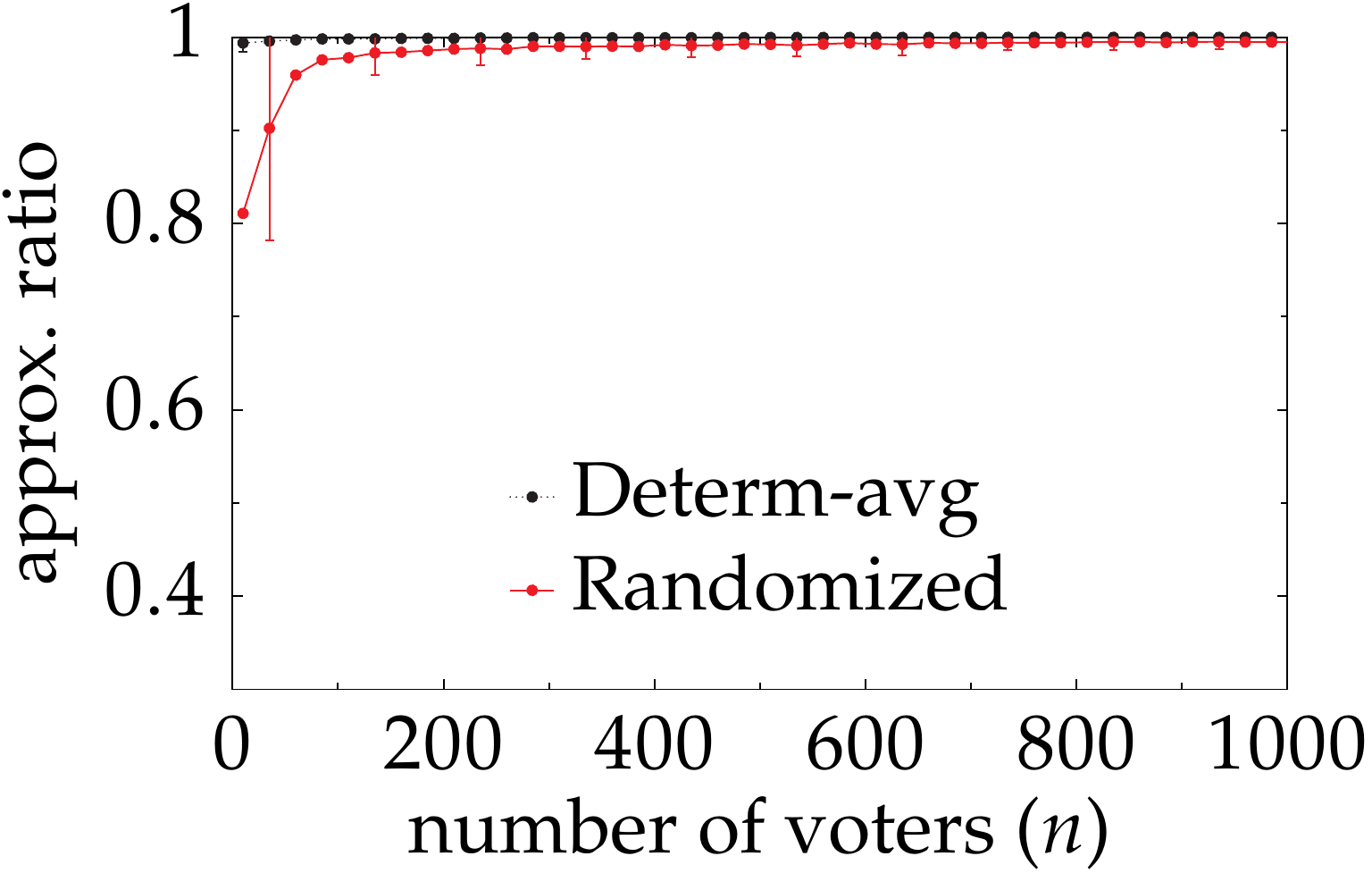}
  (b) $\ell = 5$
\endminipage
\end{center}
\caption{Approximation ratio for the two algorithms for the Borda rule (the randomized algorithm asks to compare $\ell$ random candidates, and the deterministic asks for the $\ell$-truncated ballot) assessed through computer simulations.}\label{fig:approx_exper_borda}
\end{figure}

\subsection{Approximation Algorithms for the Borda Rule}\label{sec:borda_experiments}

We empirically tested how well the two algorithms that we analyzed theoretically in the previous sections approximate the Borda rule. Specifically, we implemented \Cref{alg:psf}---which we will refer to as \textsc{Randomized}, and the algorithm described in \Cref{sec:_determnistic_algorithm_pos}. We also checked two other deterministic heuristics, that appear simple and intuitive:
\begin{enumerate}
\item The variant of the deterministic algorithm from \Cref{sec:_determnistic_algorithm_pos} that always picks the candidate with the highest $\worst$ score.
\item An algorithm we call \textsc{Deter-avg} that, for each voter $v_i$ and candidate $c_j$ assigns to $c_j$ the score
\begin{enumerate}
\item $\beta(\pos_i(c_j))$ if $\pos_i(c_j) \leq \ell$,
\item the average score of the unranked positions $\sum_{p=\ell+1^m}\beta(p) / (m-\ell)$, otherwise.
\end{enumerate}
Then, the algorithm picks the candidate with the highest total score.
\end{enumerate}
The three deterministic algorithms were almost indistinguishable in our simulations---\textsc{Deter-avg} was slightly better than the other two. Thus, for readability we present the results only for \textsc{Deter-avg} and \textsc{Randomized} and omit the description of the results for the other two deterministic algorithms. We found the following:
\begin{enumerate}
\item For preferences with no or with little structure, such as those generated by IC and SPIC, the deterministic algorithm gives better results. For preferences with more structure, e.g., those obtained from 1D and MMM models, the randomized algorithm significantly outperforms the deterministic ones.
\item For each preference distribution that we tested the randomized algorithm gives high quality approximations unless the number of voters is very small. Our results suggest to ask each voter to rank a random subsets of alternatives when the goal is to approximate the Borda rule with limited information from each voter and the number of voters exceeds a couple of hundreds.
\end{enumerate}

\subsection{Approximation Algorithms for the Minimax Rule}

Similarly to \Cref{sec:borda_experiments}, we empirically tested how well the randomized algorithm (\Cref{alg:minimax}) and the deterministic algorithm from \Cref{sec:_determnistic_algorithm_pos} approximate the Minimax rule. We refer to the two algorithms as \textsc{Randomized} and \textsc{Deterministic}, respectively. We also tested two other natural heuristics. For each two candidates $c$ and $c'$, let $n(c,c')$ denote the number of voters who (i) rank $c$ and $c'$ among their~$\ell$ most preferred candidates and prefer~$c$ over~$c'$ or (ii) who rank $c$ but not~$c'$ among their top $\ell$ positions. Then:
\begin{enumerate}
\item In our first heuristic algorithm, for each pair of candidates, $c$ and $c'$, we use a method similar to Minimax, but we replace $\sc_{\minimax}(c, c')$ by~$n(c, c')$. Then, similarly as in the case of the original Minimax rule we compute for each candidate $c$ the score $\min_{c' \neq c}n(c, c')$ and pick the candidate $w$ with the maximal score.
\item In the second heuristic, we set replace~$\sc_{\minimax}(c, c')$ by 
\begin{align*}
n \cdot \frac{n(c, c')}{n(c, c') + n(c', c)} \text{.}
\end{align*}
\end{enumerate} 

In our simulation \textsc{Deterministic} outperformed the two heuristic algorithms we mentioned above, hence we present our results only for \textsc{Deterministic} and \textsc{Randomized}. We observed the following:
\begin{enumerate}
\item The randomized algorithm for the Minimax rule needs to ask each voter to be compare more candidates than in case of Borda to achieve a good approximation. For $m=50$ candidates, asking each voter to compare $\ell=8$ of them already gave good results for sufficiently many voters.
\item The deterministic algorithm usually performs better than the randomized one, yet there are distributions (e.g., the one-dimensional Euclidean model) where the quality of winners returned by the deterministic algorithm is much worse than those returned by the randomized algorithm. On the other hand, for each distribution that we tested, the randomized algorithm consistently was giving good results when the number of voters and the number of candidates to ask each voter to rank were sufficiently large.  
\end{enumerate}

\begin{figure}[!thb]
\begin{center}

Impartial Culture

\minipage{0.42\textwidth}
  \includegraphics[width=\linewidth]{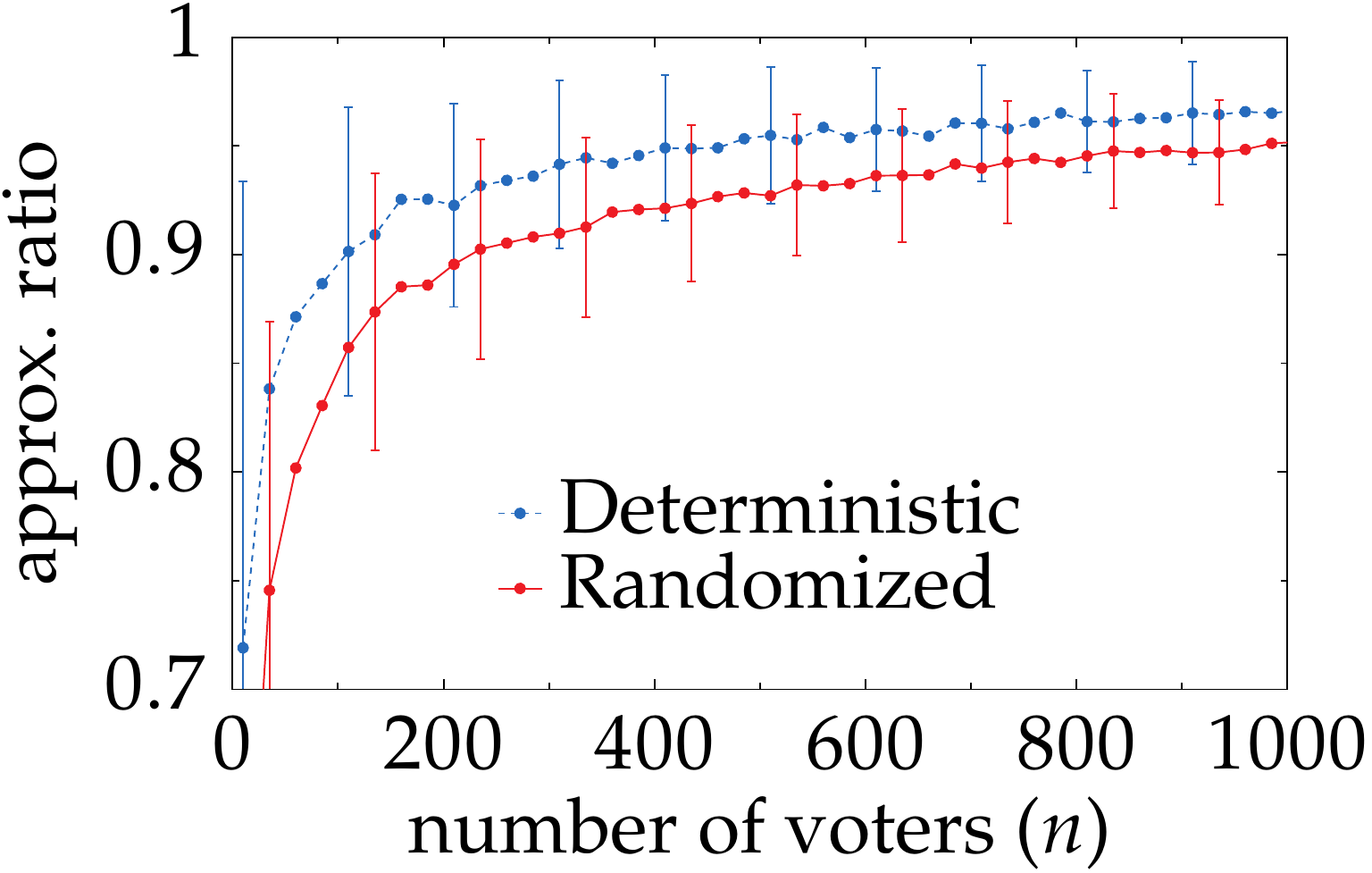}
  (a) $\ell = 2$
\endminipage\hfill
\minipage{0.42\textwidth}
  \includegraphics[width=\linewidth]{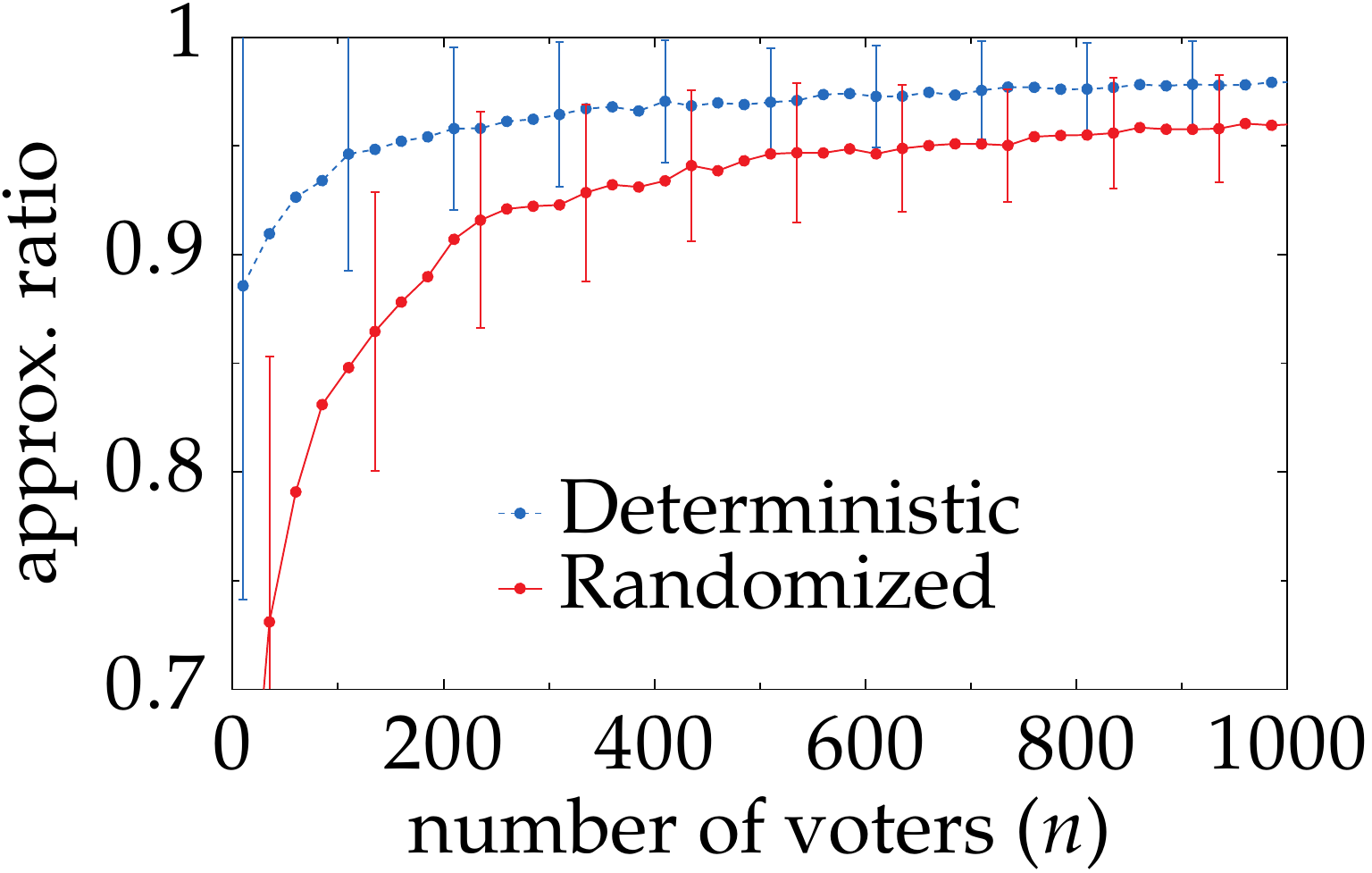}
  (b) $\ell = 8$
\endminipage

One-dimensional Euclidean Model

\minipage{0.42\textwidth}
  \includegraphics[width=\linewidth]{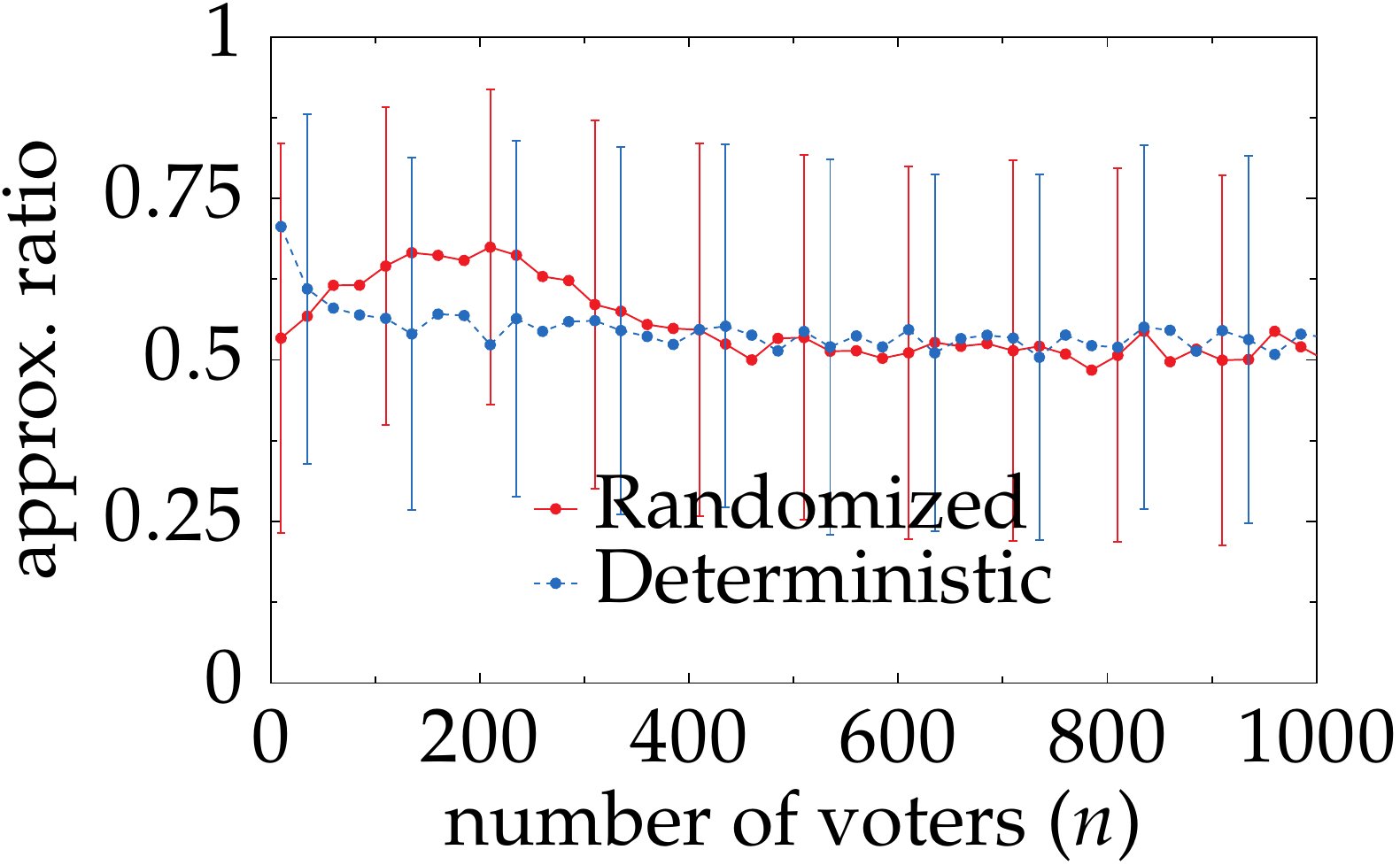}
  (a) $\ell = 2$
\endminipage\hfill
\minipage{0.42\textwidth}
  \includegraphics[width=\linewidth]{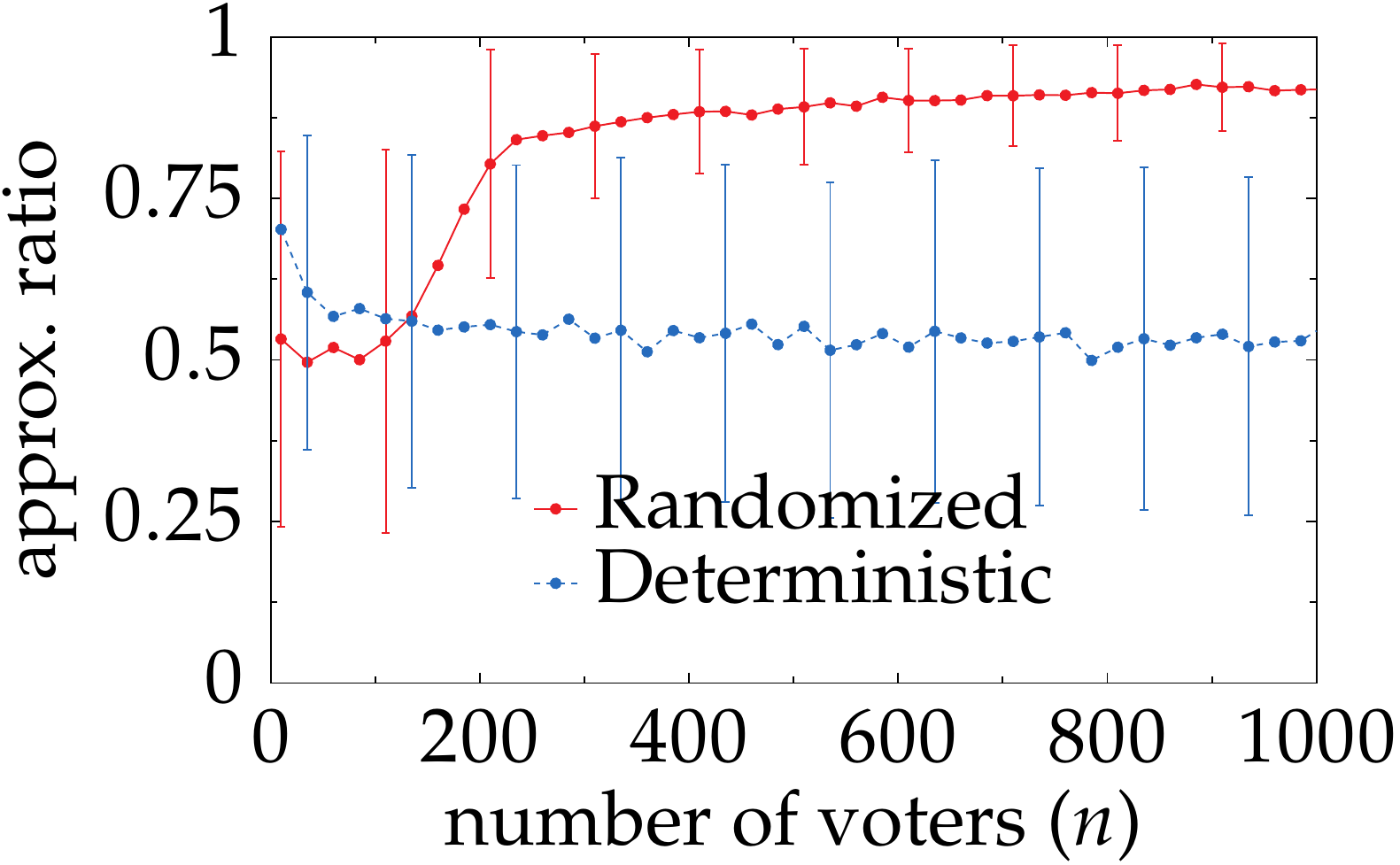}
  (b) $\ell = 8$
\endminipage

Mixture of Mallows' Models

\minipage{0.42\textwidth}
  \includegraphics[width=\linewidth]{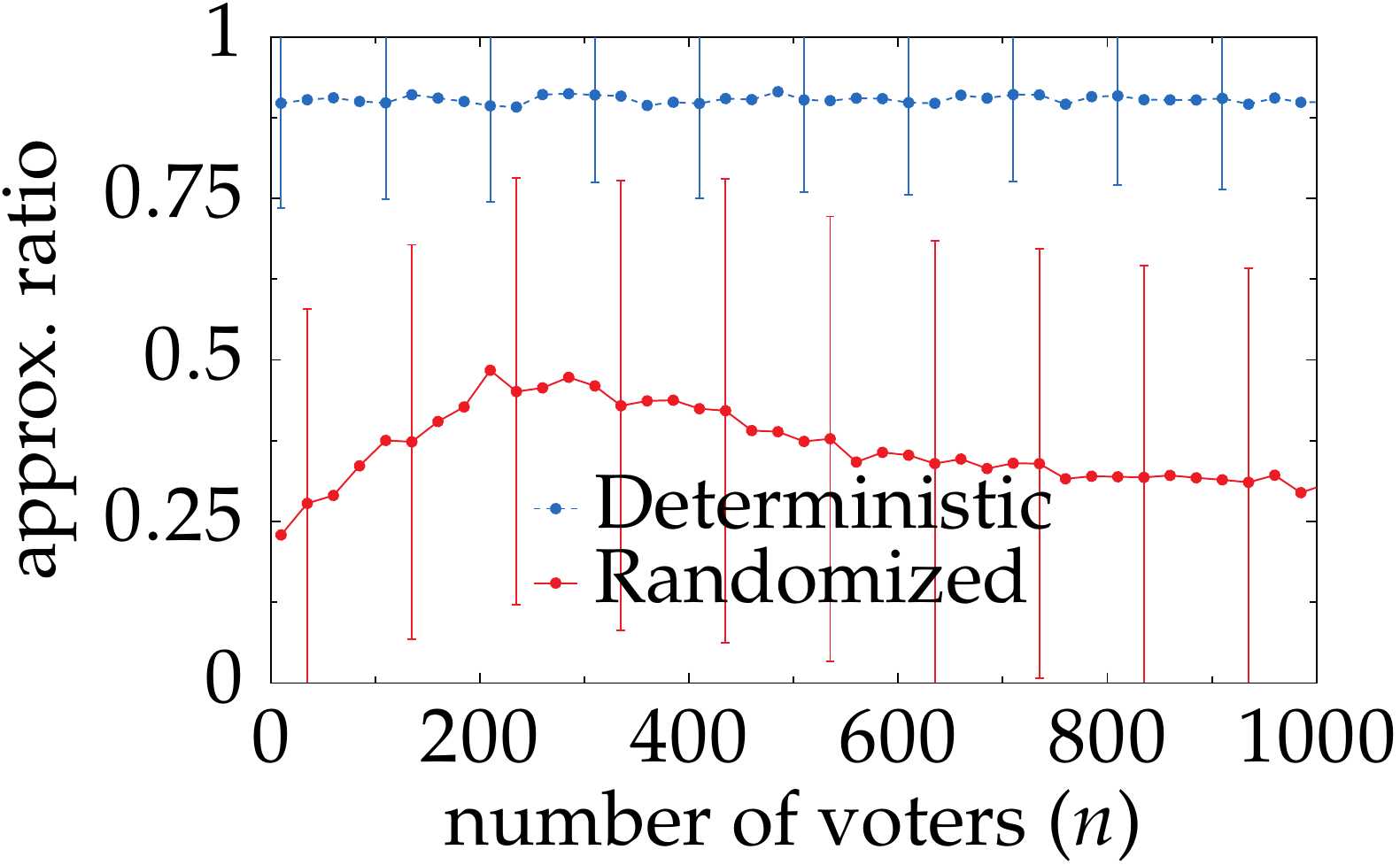}
  (a) $\ell = 2$
\endminipage\hfill
\minipage{0.42\textwidth}
  \includegraphics[width=\linewidth]{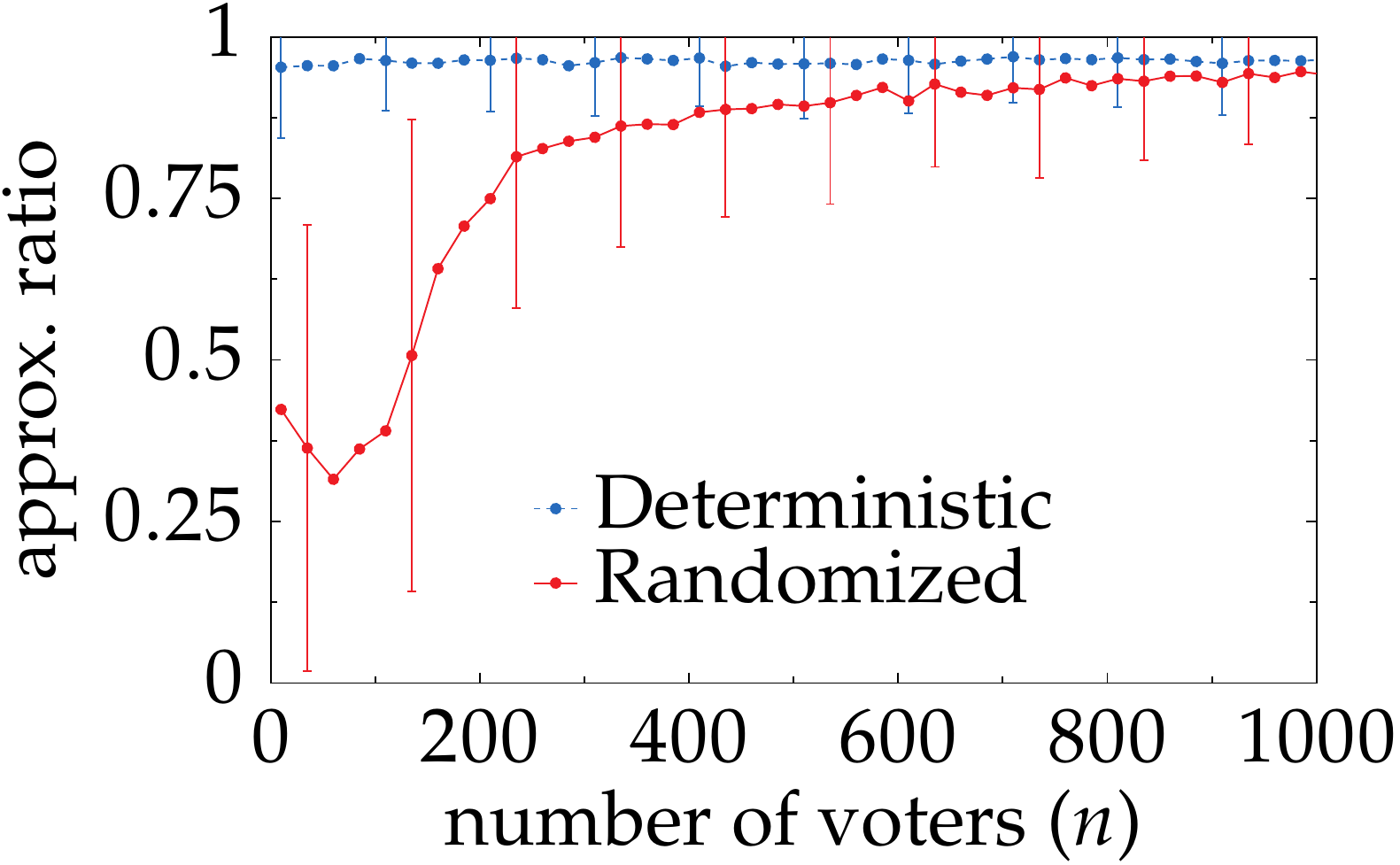}
  (b) $\ell = 8$
\endminipage

Single Peaked Impartial Culture

\minipage{0.42\textwidth}
  \includegraphics[width=\linewidth]{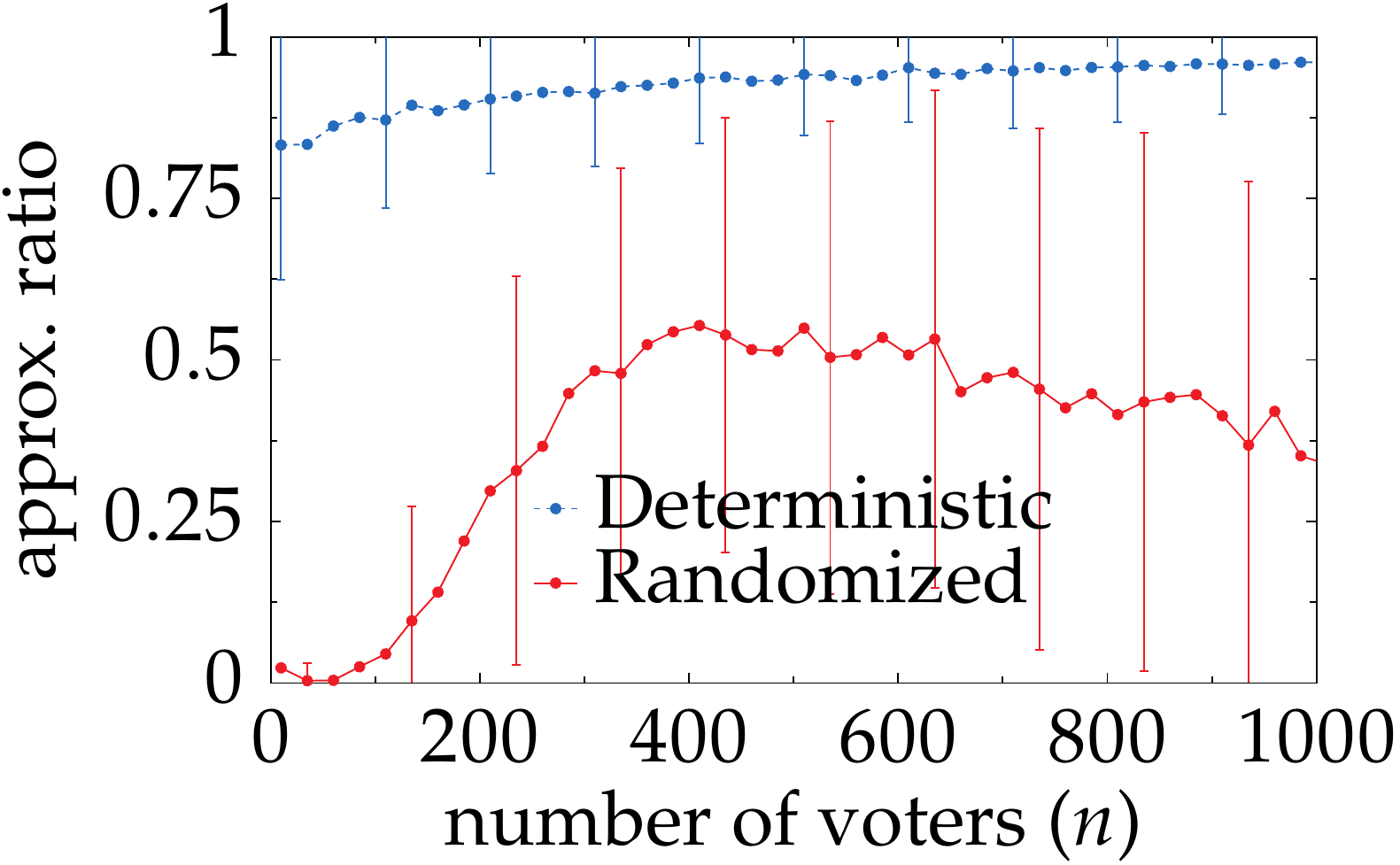}
  (a) $\ell = 2$
\endminipage\hfill
\minipage{0.42\textwidth}
  \includegraphics[width=\linewidth]{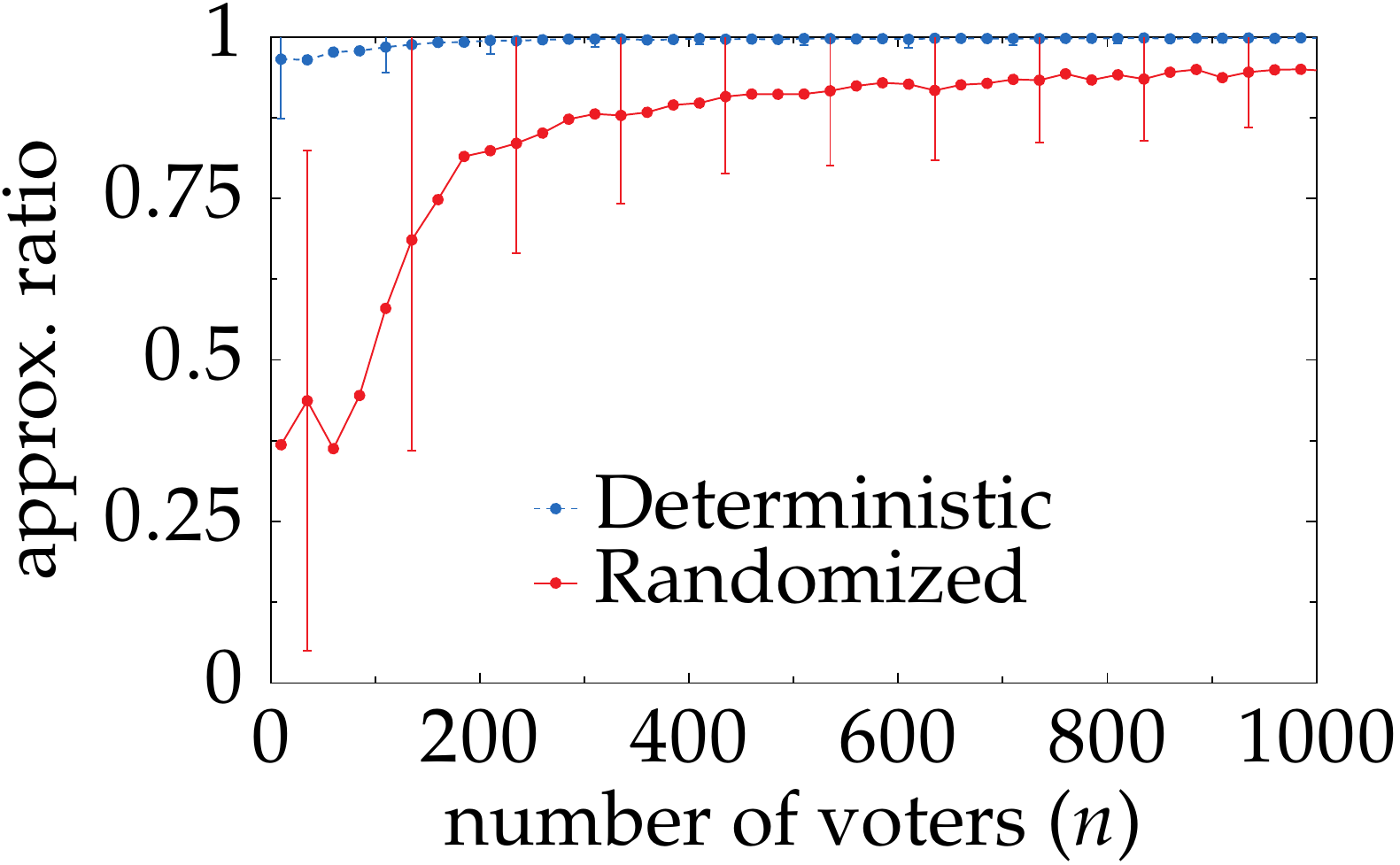}
  (b) $\ell = 8$
\endminipage
\end{center}
\caption{Approximation ratio for the two algorithms for the Minimax rule (the randomized algorithm asks to compare $\ell$ random candidates, and the deterministic asks for the $\ell$-truncated ballot) assessed through computer simulations.}\label{fig:approx_exper_minimax}
\end{figure}

\section{Conclusion}
In this paper we theoretically and experimentally analyzed how well certain election rules can be approximated when we are given only parts of voters' preferences. 
We compared two methods of eliciting voters' preferences, (i) the randomized method, where each voter is asked to compare a randomly selected subset of $\ell$ alternatives, and (2) the deterministic method, where we ask each voter to provide a ranking of her $\ell$ most preferred candidates. We investigated how well one can approximate positional scoring rules and the Minimax method through one of these two elicitation methods, providing both upper-bounds on the approximation ratio (impossibility results), and providing algorithms matching these bounds. 

We conclude that the randomized approach is usually superior; the exceptions include preference distributions with little or no structure, which rarely appear in practice. For the Borda rule, with hundreds of voters it is usually sufficient to ask each voter to compare two random candidates to achieve a high approximation guarantee. Approximating the Minimax rule is harder: one typically needs more voters and to ask them to compare more candidates---e.g., for $m=50$ candidates, we obtained high approximation guarantees for the Minimax rule only when we set the number of voters to around thousand and $\ell = 8$.
\subsubsection*{Acknowledgments}
Piotr Skowron was supported by a postdoctoral fellowship of the Alexander von Humboldt Foundation, Germany, and by the Foundation for Polish Science within the Homing programme (Project title: "Normative Comparison of Multiwinner Election Rules").

\bibliography{grypiotr2006}
\end{document}